\newtheorem{theorem}{Theorem}
\newtheorem{lemma}{Lemma}
\newtheorem{definition}{Definition}
\newif\if@restonecol  
\begin{document}

\title{Stacked Intelligent Metasurfaces-Aided eVTOL Delay Sensitive Communications}


\author{Liyuan Chen, Kai Xiong,~\IEEEmembership{Member,~IEEE}, Yujie Qin,~\IEEEmembership{Member,~IEEE}, Hanqing Yu,~\IEEEmembership{Member,~IEEE}, \\Supeng Leng,~\IEEEmembership{Member,~IEEE},
Chau Yuen,~\IEEEmembership{Fellow,~IEEE}

\thanks{

L. Chen, K. Xiong, Y. Qin, H. Yu, and S. Leng are with School of Information and Communication Engineering, University of Electronic Science and Technology of China, Chengdu, 611731, China; and, Shenzhen Institute for Advanced Study, University of Electronic Science and Technology of China, Shenzhen, 518110, China.
}


\thanks{
C. Yuen is with School of Electrical and Electronics Engineering, Nanyang Technological University, 639798, Singapore. 
}






\thanks{The corresponding author is Kai Xiong, email: xiongkai@uestc.edu.cn}

}


\maketitle


\begin{abstract}
With rapid urbanization and increasing population density, urban traffic congestion has become a critical issue, and traditional ground transportation methods are no longer sufficient to address it effectively. To tackle this challenge, the concept of Advanced Air Mobility (AAM) has emerged, aiming to utilize low-altitude airspace to establish a three-dimensional transportation system. Among various components of the AAM system, electric vertical take-off and landing (eVTOL) aircraft plays a pivotal role due to their flexibility and efficiency. However, the immaturity of Ultra Reliable Low Latency Communication (URLLC) technologies poses significant challenges to safety-critical AAM operations. Specifically, existing Stacked Intelligent Metasurfaces (SIM)-based eVTOL systems lack rigorous mathematical frameworks to quantify probabilistic delay bounds under dynamic air traffic patterns, a prerequisite for collision avoidance and airspace management.
To bridge this gap, we employ network calculus tools to derive the probabilistic upper bound on communication delay in the AAM system for the first time. Furthermore, we formulate a complex non-convex optimization problem that jointly minimizes the probabilistic delay bound and the propagation delay. To solve this problem efficiently, we propose a solution based on the Block Coordinate Descent (BCD) algorithm and Semidefinite Relaxation (SDR) method. In addition, we conduct a comprehensive analysis of how various factors impact regret and transmission rate, and explore the influence of varying load intensity and total delay on the probabilistic delay bound. Under identical meta-atom settings, simulation results indicate that the proposed method achieves a 51.47\% average improvement in transmission rate over the low-complexity Alternating Optimization (AO) algorithm.
This work provides a theoretical foundation and technical reference for future delay performance optimization and safety assurance in AAM systems.
\end{abstract}

\begin{IEEEkeywords}
Advanced air mobility, stacked intelligent metasurfaces, network calculus, semidefinite relaxation method.
\end{IEEEkeywords}

\IEEEpeerreviewmaketitle

\section{Introduction}
\IEEEPARstart{R}{ecently}, rapid urban population growth coupled with escalating transportation demands have exacerbated ground traffic congestion. Conventional transportation infrastructures face significant challenges in accommodating the dynamic mobility requirements of contemporary urban environments, simultaneously contributing to prolonged commute durations and heightened environmental degradation. In response, Advanced Air Mobility (AAM) has emerged as a transformative paradigm that exploits low-altitude airspace to deliver diverse transportation services, encompassing cargo logistics, passenger transit, and emergency response operations~\cite{9447255}. Industry forecasts project the commercial deployment of AAM technologies by 2028, with anticipated annual passenger throughput reaching approximately 440 million by 2050~\cite{9952882}.


Typically, the realization of AAM relies on a suite of advanced technologies, including Unmanned Aerial Vehicles (UAVs)~\cite{10343091,10915325},
Electric Vertical Take-Off and Landing (eVTOL) aircraft~\cite{10749494,10388419}, and advanced air traffic management systems~\cite{10677063}. These technologies collectively facilitate a novel transportation paradigm that complements existing ground-based systems. By leveraging the underutilized low-altitude airspace beneath conventional aviation corridors, AAM not only alleviates ground traffic congestion and reduces travel times but also contributes to mitigating the environmental footprint associated with traditional transportation modalities.


Similarly to ground Vehicle-to-Vehicle (V2V) communication systems, eVTOL aircraft necessitate a dedicated aerial communication framework, termed eVTOL-to-eVTOL (eV2V) communication~\cite{10230034}, to support cooperative perception and navigation. This communication system is engineered to satisfy the stringent requirements of inter-eVTOL data exchange, networking, and onboard computational capabilities. By facilitating real-time sharing of critical flight parameters among eVTOLs, the system effectively mitigates the risk of mid-air collisions. Consequently, the deployment of eV2V communication enhances airspace utilization efficiency while ensuring the operational safety of AAM networks.


Despite its potential, eV2V communication in AAM systems confronts substantial challenges. In 6G-enabled AAM networks, achieving ultra-low latency, on the order of 1 ms, is essential to guarantee timely and safe operations~\cite{9952749}. Considering the high mobility of low-altitude aerial vehicles, even marginal communication delays can lead to slower response times and elevate the risk of accidents. Consequently, the development of eV2V communication technologies that offer both low latency and high reliability~\cite{10444151,10539166} is critical to the safe and efficient functioning of AAM systems.


Compared to conventional Multiple Input Multiple Output (MIMO) antenna systems, Holographic MIMO (HMIMO) systems exhibit enhanced performance in terms of both energy efficiency and transmission capacity~\cite{10163760,10301687,10232975}. 
From a hardware perspective, holographic MIMO surfaces (HMI-MOS) feature a thinner profile and reduced power consumption. HMIMO systems combine the multi-antenna benefits of traditional MIMO with advanced signal processing techniques, enabling significantly higher data transmission rates while lowering energy usage~\cite{9848831}, 
thereby contributing to reduced communication latency. By employing HMIMO communication, eVTOL aircraft can sustain high-rate data transmission with decreased battery drain, ultimately extending flight range and enhancing overall endurance.


Conventional HMIMO systems are generally composed of a single-layer metasurface and can be classified into two categories: active HMIMO systems, exemplified by Large Intelligent Surfaces (LIS)~\cite{8264743}, and passive HMIMO systems, represented by Reconfigurable Intelligent Surfaces (RIS)~\cite{8741198,9086766}. Nevertheless, these architectures are inherently limited in their degrees of freedom for optimization, which restricts their adaptability and performance in dynamic environments.


To overcome these limitations, the concept of Stacked Intelligent Metasurfaces (SIM) has been proposed~\cite{10158690,10515204}. By employing a multilayer stacked metasurface architecture, SIM significantly enlarges the optimization space, improves transmission efficiency, and enables signal processing directly in the analog domain. However, investigations into the delay performance of SIM in low-altitude scenarios remain limited, posing challenges to its practical implementation in AAM systems. A primary challenge lies in guaranteeing high reliability while effectively controlling the maximum propagation delay.



This paper presents a systematic analysis of the delay performance of SIM in low-altitude scenarios by leveraging network calculus, a rigorous mathematical framework for performance evaluation. The objective is to provide theoretical insights to support the deployment of SIM within AAM systems. Unlike deterministic network calculus, which only accounts for fixed upper bounds on delay~\cite{9374444}, stochastic network calculus enables the characterization of probabilistic delay bounds~\cite{10039312,5593629}, making it more suitable for complex multi-hop networks with dynamic variations~\cite{6733260}. This approach is particularly critical for collaborative eVTOL scenarios with stringent real-time requirements, as it not only guarantees reliable performance but also facilitates the effective analysis of communication systems subject to high uncertainty and dynamic behaviors.


Based on the aforementioned performance analysis, this paper proposes a framework leveraging the Block Coordinate Descent (BCD) algorithm~\cite{5946716} to decompose the complex optimization problem. To optimize the configuration of SIM phase shifts, a Semidefinite Relaxation (SDR) based strategy~\cite{8811733} is employed, aiming to jointly maximize the transmission rate and minimize the probabilistic delay bound in SIM-enabled communication systems.
The main contributions of this paper are summarized as follows:

\begin{itemize}
\item We propose a network calculus method-based to SIM-based delay performance analysis for the first time, providing a novel perspective for optimizing low-altitude communication propagation delay by rigorously analyzing the probabilistic upper bound. This study not only addresses a theoretical gap in the delay performance analysis of SIM communication systems but also develops an optimization framework that balances the maximization of transmission rate and the minimization of propagation delay, thereby improving the overall system efficiency and reliability. The proposed framework offers an innovative approach for deploying SIM technology in low-altitude environments, further enhancing its practical value and significance within intelligent low-altitude networks.


\item We develope an SDR-based approach to solve the optimization problem, leveraging it to enhance channel gain through precise optimization of the SIM phase shifts, thereby effectively improving the transmission rate. By optimizing the configuration of SIM phase shifts, the SDR method not only reduces propagation delay but also increases the transmission rate, offering an effective solution to improve the overall performance of low-altitude communication systems. This approach establishes a solid foundation for efficient, low-latency communications, and facilitates the advancement of SIM technology in practical deployments.


\item Simulation results demonstrate that the proposed BCD algorithm outperforms the low-complexity Alternating Optimization (AO) algorithm in optimizing the transmission rate. Furthermore, we systematically analyze the impact of various parameters on transmission rate and system regret, and investigate the influence of system load and total delay on the probabilistic delay bound.

\end{itemize}

The remainder of this paper is organized as follows. Section~\ref{sec:Related Work} reviews the related literature. Section~\ref{sec:System Model} describes the system model. In Section~\ref{sec:SNC-assisted Probabilistic Delay Bound Analysis}, we derive the end-to-end probabilistic delay bound based on stochastic network calculus theory. Section~\ref{sec:SIM Performance Analysis} defines the transmission rate. Section~\ref{sec:Problem Formulation and Optimization} formulates the optimization problem and introduces a BCD-based solution method. Section~\ref{sec:Performance Evaluation} presents the simulation results, and Section~\ref{sec:Conclusion} concludes the paper.



\section{Related Work}
This section reviews the  relevant literature, which can be broadly categorized into three key areas: (i) eVTOL transportation systems, (ii) the fundamentals of stacked intelligent metasurfaces, and (iii) the basic principles of network calculus. A brief overview of related studies in each of these categories is presented below.

\label{sec:Related Work}
\subsection{eVTOL Transportation System}

The ongoing acceleration of global urbanization has led to a continuous increase in passenger transportation demand, resulting in intensified traffic congestion on the ground due to overwhelming traffic volume. There is an urgent need for efficient solutions to overcome the limitations of the ground transportation system. One of the most promising approaches is the development of an aerial transportation system based on eVTOL aircraft. eVTOL-based transportation systems are designed to offer a safe, efficient and accessible mode of travel within large urban agglomerations~\cite{9925754}.

A typical eVTOL system consists of three core components: the vehicle platform (eVTOL aircraft), the airspace infrastructure (designated air corridors), and the support system (high-reliability communication network). The service scenarios for the eVTOL systems include primarily: air taxis, air metro, last-mile delivery services, passenger transportation, and emergency service~\cite{10356185,10803914}.

\subsection{Stacked Intelligent Metasurfaces}

In wireless communication systems, traditional RIS generally adopts a single layer metasurface architecture. This design presents significant limitations in beamforming degree of freedom~\cite{10767193} and multi-user interference suppression~\cite{10515204}. To overcome this technical bottleneck, SIM achieves performance leap through an innovative multi-layer stacked metasurfaces structure~\cite{10158690}.

A typical SIM system consists of several metasurface layers, with each layer containing multiple meta-atoms. These layers are independently controlled by a Field Programmable Gate Array (FPGA) controller. This hierarchical structure provides SIM with multi-dimensional degree of freedom, enabling layer-by-layer dynamic optimization of electromagnetic parameters such as phase shifts and amplitude. As a result, SIM can precisely reconstruct the spatial electromagnetic field, including the direction of wave propagation, reflection coefficients, polarization characteristics and so on.


Current research demonstrates that SIM technology exhibits significant advantages in various areas, including beamforming~\cite{10279173}, SISO/MISO system enhancement~\cite{10767193}, channel estimation optimization~\cite{10445164}, satellite communication~\cite{10445200}, eVTOL communication~\cite{xiong2025digitaltwinbasedsimcommunication}, Direction-of-Arrival (DOA) estimation~\cite{an2024two}. However, its application in AAM remains unexplored. In particular, for communication scenarios between eVTOLs, as illustrated in Fig.~\ref{scene}, there is a noticeable gap in the published research on the performance optimization of the SIM-based eV2V communication system. This technological gap needs to be addressed urgently.

\subsection{Stochastic Network Calculus}
Stochastic Network Calculus (SNC) is capable of handling stochastic traffic variations in network and utilizes probabilistic tools to provide 
performance bound~\cite{10039312}. These capability makes network calculus highly applicable in various domains, including optimization of communication network performance~\cite{10288334}, Industrial Internet of Things (IIoT)~\cite{5554709}, and low-altitude communication network~\cite{10706082}. 

The following introduces some common concepts in network calculus.  
Define \( F \) as a non-negative, non-decreasing function, and define \( F^c \) as a non-negative, non-increasing function.
For any random variable \( X \), its distribution function is defined as:
\(F_X(x) = P(X \leq x)\) The complementary distribution function is defined as: \(\bar{F}_X(x) = P(X > x)\). If both \( X \) and \( Y \) are random variables, for all \( x \geq 0 \), if \(P(X > x) \leq f(x), \quad P(Y > x) \leq g(x)\) then we have~\cite{7011339}: \(P(X + Y > x) \leq f(x) \otimes g(x)\), where \(\otimes\) denotes the min-plus convolution, which is defined as follows: $f \otimes g(t)=\inf _{0<\tau \leq t}\{f(\tau)+g(t-\tau)\}$.

Let $A(t)$ denotes the arrival process, which represents the cumulative number of bits that have arrived during the time interval $[0, t]$. Similarly, let \( S(t) \) denotes the service process, defined as the cumulative number of bits that have been served in the interval \([0, t]\). The departure process, denoted by \( A^*(t) \), refers to the cumulative number of bits that have departed within the same interval. For any \( 0 \leq \tau \leq t \), the arrivals, services and departures in the subinterval \([ \tau, t ]\) are given by: $A(\tau, t) = A(t) - A(\tau)$, $S(\tau, t) = S(t) - S(\tau) $, $A^*(\tau, t) = A^*(t) - A^*(\tau)$.

For stochastic network calculus~\cite{7011339}, a data flow is said to conform to a stochastic arrival curve \( \alpha \in F \) with a bounding function \( f \in F^c \), denoted as \( A \sim \langle f, \alpha \rangle \), if for all \( t \geq 0 \) and \( x \geq 0 \), the following inequality holds:
\begin{equation}
\begin{split}
\begin{aligned}
P \left\{ \sup_{0 \leq \tau \leq t} \left[ A(\tau, t) - \alpha(t - \tau) \right] > x \right\} \leq f(x).
\end{aligned}
\end{split}
\label{hrtuktykmfumyu}
\end{equation}

Similarly, a server is said to provide a stochastic service curve \( \beta \in F \) with a bounding function \( g \in F^c \), denoted as \( S \sim \langle g, \beta \rangle \), if for all \( 0 \leq s \leq t \) and \( x \geq 0 \), it satisfies: $P \left\{ A \otimes \beta(t) - A^*(t) > x \right\} \leq g(x)$.
For a FIFO (First In First Out) system, the delay \( D \) at time \( t \geq 0 \) is given by: $D = \inf \left\{ \tau \geq 0 \mid A(t) \leq A^*(t + \tau) \right\}$.

\section{System Model}
\label{sec:System Model}
In low-altitude intelligent network scenarios, SIM technology offers  significant advantages, including a lightweight structure and ease of integration, enabling seamless incorporation into eVTOL aircraft. We consider a bidirectional air-to-air (A2A) communication scenario where two eVTOLs, denoted as ${\rm eVTOL_A}$ and ${\rm eVTOL_B}$, are each equipped with both a transmitter SIM (TX-SIM) and a receiver SIM (RX-SIM), as illustrated in Fig.~\ref{scene}. This configuration enables direct A2A communication between the two aircraft via their respective SIM modules. Through the SIM-enabled A2A communication link, critical real-time information, including flight status, position, altitude, and velocity, can be exchanged to support safe and efficient aerial operations.

\begin{figure}[h]
\centering
     \includegraphics[width=0.40\textwidth]{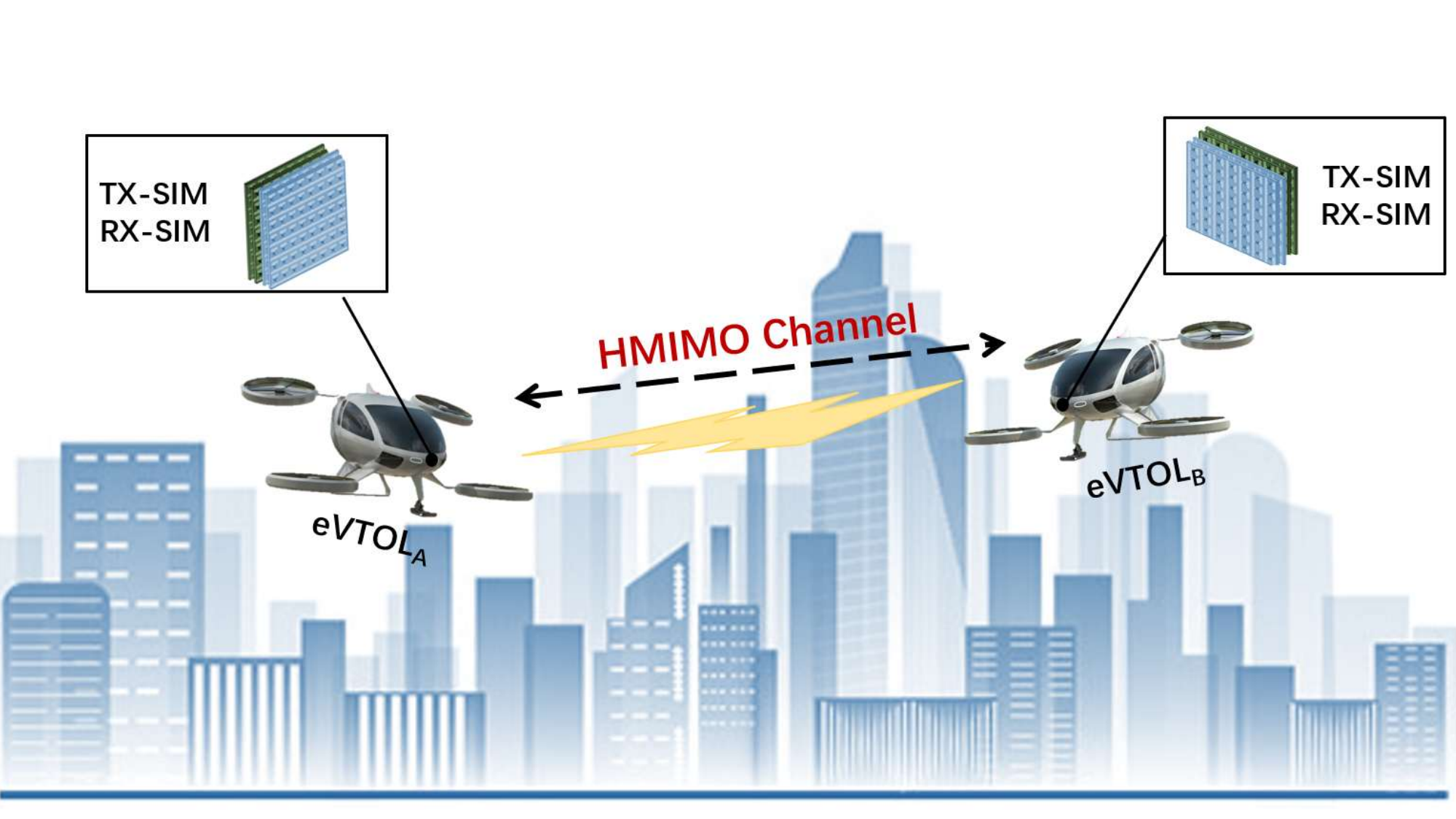} 
     \caption{Illustration of the SIM-Aided A2A Communication between eVTOLs.} 
\label{scene}
\end{figure}
Considering data freshness as a key requirement for flight safety, the communication delay during data packet transmission must be constrained within a reasonable range to avoid potential collision risks. Therefore, in this work, we aim to derive a probabilistic delay bound, $P\{D > T\}$, defined as the probability that the packet delay $D$ exceeds a given threshold $T$, based on the Stochastic Network Calculus theory.


In the considered system, we make the following red three assumptions: (i) prior to transmission, data packets are temporarily stored in a buffer equipped on $\rm eVTOL_A$, which operates under a FIFO principle;  (ii) the buffer has sufficient capacity, and issues such as packet loss or retransmission are beyond the scope of this work; and (iii) the service rate is greater than the data arrival rate, ensuring that the queue length remains bounded.



\begin{figure}[h]
\centering
\includegraphics[width=0.45\textwidth]{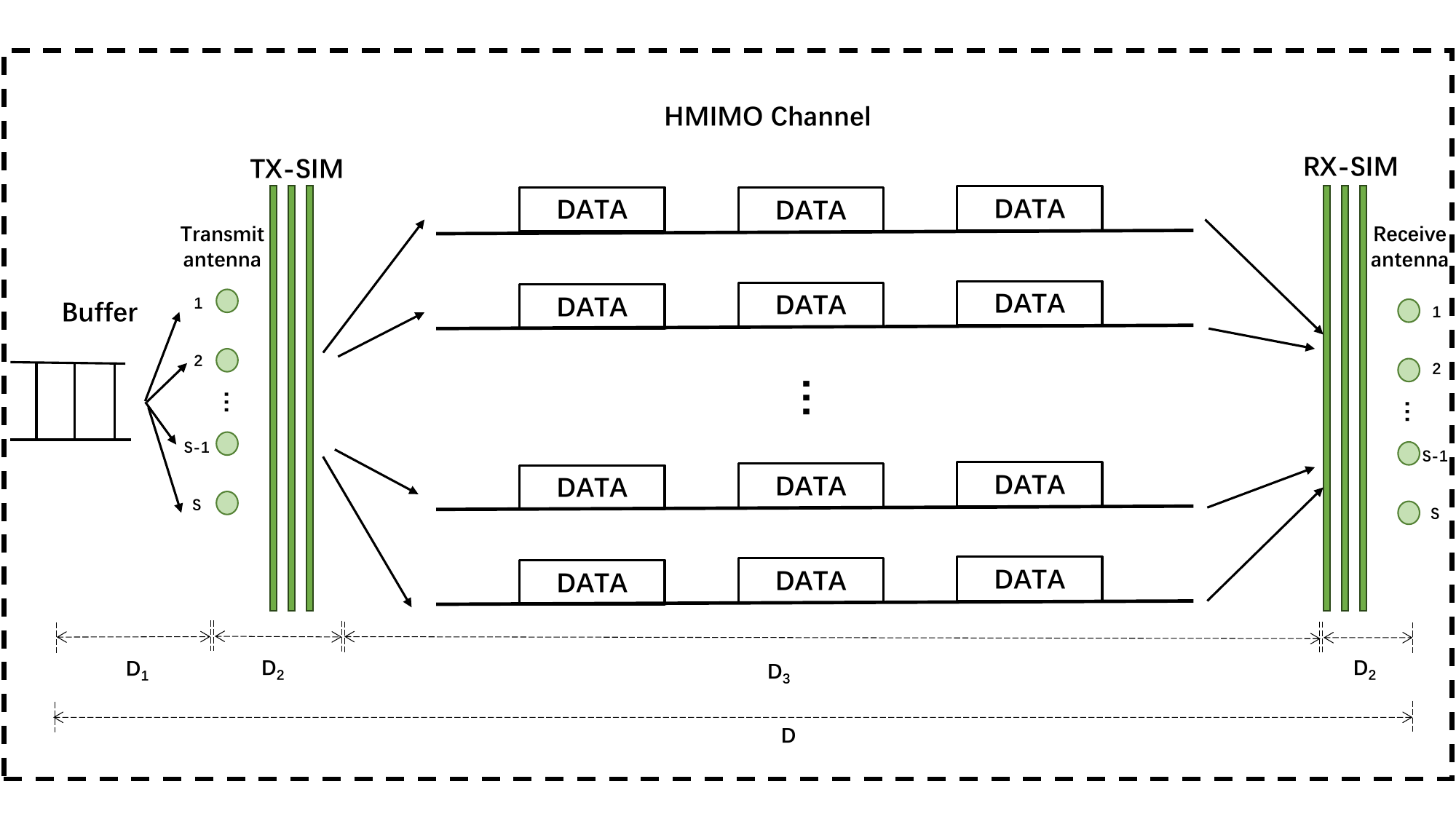} 
\caption{Illustration of the End-to-End Packet Delivery Mechanism for eV2V Communication.} 
\label{transmission process}
\end{figure}

As illustrated in Fig.~\ref{transmission process}, the total packet delay $D$ consists of the queuing time in the buffer, delays in both the TX- and RX-SIM modules, and the propagation delay over the HMIMO channel. Specifically,
\begin{itemize}
\item $D_1$: \textbf{Queuing delay}, denoting the time that data packets wait in the buffer before transmission.
\item $D_2$: \textbf{Transmission delay}, accounting for the transmission time from the $s$-th source to the first metasurface layer of the TX-SIM, as well as the internal transmission delay within the RX-SIM. This delay is considered constant in this work.
\item $D_3$: \textbf{Propagation delay}, representing the time required for data packets to be transmitted across the HMIMO channel from the output metasurface of the TX-SIM to the input metasurface of the RX-SIM.
\end{itemize}

Based on the aforementioned assumptions and system model, we characterize the statistical properties of the total end-to-end delay, $D$, which can be expressed as the sum of the queuing delay, transmission delay, and propagation delay, $D = D_1 + D_2 + D_3$, and our goal is to derive the probabilistic upper bound on the end-to-end delay, as well as to minimize both the violation probability and the overall delay. These are defined and derived in the following subsections.
 \begin{definition}[End-to-end Delay of Packets] 
The probabilistic upper bound on the end-to-end delay is given by
\begin{align}
\bar{D} = P \{D_1 + D_2 + D_3 > T\} ,
\end{align}
where $T$ denotes the delay threshold. Our objective is to minimize both $\bar{D}$ and $T$.
\end{definition}

\section{SNC-assisted Probabilistic Delay Bound Analysis}
\label{sec:SNC-assisted Probabilistic Delay Bound Analysis}
In this section, we analyze the upper bound of the total end-to-end delay in the proposed system. We begin by examining the queuing delay in the buffer, which is expressed as a function of the queuing delay. Subsequently, we analyze the propagation delay itself and formulate an optimization problem to minimize it. This, in turn, corresponds to minimize the upper bound of the total packet delay in the considered system.

\subsection{Queueing Delay}
In this subsection, we analyze the queuing delay of the data in the buffer. Note that only the queuing delay on the TX-SIM side is considered in this analysis. The upper bound of $D_1$ is given in the following lemma.
\begin{lemma}[Upper Probabilistic Bound of Queueing Delay] 
\label{theorm_barD1}
The queueing upper probabilistic bound is given by 
\begin{align}
 \bar{D}_1 = \epsilon(v_{\text{data}} B t_b),
\end{align}
where  $\beta_{\text{data}}(t_b) $ is the service curve, $t_b$ is the expected queuing delay, and $\epsilon(\cdot)$ is the bounding function, in case of Poisson arrival process $\epsilon(x) = e^{-\mu x} $, and $ \mu > 0 $ is a parameter that minimizes the bounding function  $\epsilon(x)$, and
\begin{align}
\beta_{\text{data}}(t) = v_{\text{data}}  B t,
\end{align}
in which $B$ denotes the system bandwidth and $v_{data}$ is the transmission rate in the HMIMO channel.
\end{lemma}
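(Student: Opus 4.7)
The plan is to derive $\bar{D}_1$ by specializing the general stochastic network calculus (SNC) delay bound to the FIFO buffer on the TX-SIM side, under a Poisson arrival model and the affine service curve $\beta_{\text{data}}(t) = v_{\text{data}} B t$ given in the statement.

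First, I would model the data packets entering the buffer as a Poisson process and derive an associated stochastic arrival curve $\alpha$ with exponential bounding function $f(x) = e^{-\mu x}$. This follows from a Chernoff-type argument on the moment generating function of the Poisson increments: for every admissible $\mu > 0$ one obtains $P\{\sup_{0 \leq \tau \leq t} [A(\tau, t) - \alpha(t-\tau)] > x\} \leq e^{-\mu x}$, so that $A \sim \langle f, \alpha \rangle$ in the notation of~\eqref{hrtuktykmfumyu}. Since the TX-SIM server is deterministic with constant rate $v_{\text{data}} B$, the service process satisfies $S \sim \langle g, \beta_{\text{data}} \rangle$ with trivial bounding function $g \equiv 0$.

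Second, I would invoke the FIFO delay characterization recalled in the preliminaries, $D_1 = \inf\{\tau \geq 0 \mid A(t) \leq A^*(t+\tau)\}$, together with the standard SNC result that couples the arrival and service bounding functions via the min-plus convolution $f \otimes g$. For this single-server FIFO queue, the probability that the queueing delay exceeds $t_b$ is upper bounded by $(f \otimes g)$ evaluated at the cumulative service rendered during $t_b$, namely at $\beta_{\text{data}}(t_b) = v_{\text{data}} B t_b$. Because $g \equiv 0$, the convolution degenerates to $f$ itself, giving $P\{D_1 > t_b\} \leq e^{-\mu v_{\text{data}} B t_b} = \epsilon(v_{\text{data}} B t_b)$, which is precisely the claimed $\bar{D}_1$.

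The main obstacle will lie in two places. The first is making the Poisson-to-exponential-bound step rigorous: one must pin down the feasible range of the parameter $\mu$ under the stability assumption (iii) of the system model, so that the moment generating function of the arrivals combines with the service-rate slack to yield a genuine exponential decay. The second is correctly matching the horizontal distance $h(\alpha, \beta_{\text{data}})$ used in the generic SNC delay bound with the expected queueing delay $t_b$ appearing in the statement; once this identification is made, the final optimization of $\mu > 0$ to tighten the exponential bound is routine.
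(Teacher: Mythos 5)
Your proposal is correct and follows essentially the same route as the paper: the appendix proof likewise starts from the FIFO delay definition, bounds $P\{D_1>t_b\}$ via the stochastic arrival and service curve definitions and the min-plus convolution $f\otimes g$, invokes the stability condition $\beta(t)\geq\alpha(t)$ to evaluate the bound at $\beta_{\text{data}}(t_b)=v_{\text{data}}Bt_b$, and lets $g\equiv 0$ collapse the convolution to $f$. The only cosmetic difference is that the paper keeps the appendix argument generic in $f$ and defers the Poisson specialization of $\epsilon(x)=e^{-\mu x}$ and the choice of $\mu$ to the main text, whereas you fold that step into the proof itself.
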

\begin{proof}
    See Appendix \ref{app_barD1}.
\end{proof}

Recall that we assume the arrival process $A(t)$ follows a Poisson distribution, and the packet size follows an exponential distribution with mean $l_d$. Let $\delta_d(t)$ denote the number of packets arriving at the HMIMO channel at time $t$, and let $\delta(t)$ denote the number of packets generated from $S$ sources. In a stable network, we have $\delta_d(t) = \delta(t - D_2)$.

The stochastic arrival curve can then be expressed as~\cite{jiang2010note}:
\begin{equation}
\alpha(t) =\frac{1}{\mu } \log \mathbb{E}[e^{\mu A(t)}]  = \frac{\delta_d l_d t}{1 - \mu l_d},
\end{equation}
and consequently, according to the stability condition \( \alpha(t) \leq \beta(t) \), the optimal value of \( \mu \) is given by:
\begin{equation}
\mu = \frac{v_{\text{data}} B - \delta_d l_d}{v_{\text{data}} B l_d}.
\label{eq:mu}
\end{equation}

\begin{lemma}[Queueing Probabilistic Delay Bound in Case of Poisson Arrival Process]
Combining the bounding function $\epsilon(x)=e^{-\mu x}$ and ~\eqref{eq:mu}, the probabilistic delay bound of \( D_1 \) in case of Poisson arrival process is given by:
\begin{equation}
\bar{D}_1(t_b) = e^{-\mu v_{\text{data}} B t_b} = e^{-\frac{v_{\text{data}} B - \delta_d l_d}{l_d} t_b}.
\label{eq_pdfD1}
\end{equation}
\end{lemma}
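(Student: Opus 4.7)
The plan is to obtain this bound by direct specialization of Lemma~\ref{theorm_barD1} to the Poisson arrival case, followed by substitution of the optimal tilt parameter $\mu$ established in~\eqref{eq:mu}. Concretely, I would first recall from Lemma~\ref{theorm_barD1} that the probabilistic bound on the queueing delay takes the form $\bar{D}_1(t_b) = \epsilon(v_{\text{data}} B t_b)$, where $\epsilon(\cdot)$ is the bounding function attached to the stochastic arrival curve of the traffic. Because the excerpt explicitly assumes Poisson arrivals with exponentially distributed packet sizes of mean $l_d$, the bounding function reduces to the Chernoff-type exponential $\epsilon(x) = e^{-\mu x}$ with $\mu > 0$ a free parameter.

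Next, I would plug this exponential form into the expression supplied by Lemma~\ref{theorm_barD1}, yielding the intermediate expression $\bar{D}_1(t_b) = e^{-\mu v_{\text{data}} B t_b}$, which already matches the first equality of the stated identity. The remaining task is then purely algebraic: substitute the explicit value $\mu = (v_{\text{data}} B - \delta_d l_d)/(v_{\text{data}} B l_d)$ from~\eqref{eq:mu}, and cancel the common factor $v_{\text{data}} B$ appearing in the numerator and denominator of the resulting exponent, which immediately produces the claimed form $e^{-(v_{\text{data}} B - \delta_d l_d) t_b / l_d}$.

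The only nontrivial point, and the step I would document with care, is justifying why this particular $\mu$ is the admissible choice. Since $e^{-\mu x}$ is tightest when $\mu$ is as large as possible, but the stochastic arrival curve $\alpha(t) = \delta_d l_d t / (1 - \mu l_d)$ must remain dominated by the service curve $\beta_{\text{data}}(t) = v_{\text{data}} B t$ in order for the FIFO queue to be stable and for the SNC delay inequality to apply, the admissible range of $\mu$ is capped by the stability condition $\alpha(t) \leq \beta_{\text{data}}(t)$. Solving that linear inequality for $\mu$ returns exactly the boundary value in~\eqref{eq:mu}, so selecting it simultaneously preserves queue stability and minimizes the bound. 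Beyond this verification the argument collapses to a one-line substitution, so I do not anticipate a genuine obstacle; the analytic content has already been absorbed into Lemma~\ref{theorm_barD1} and the derivation of $\mu$ preceding the statement.
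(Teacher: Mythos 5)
Your proposal is correct and follows exactly the route the paper intends: specialize Lemma~\ref{theorm_barD1} to $\epsilon(x)=e^{-\mu x}$, substitute the stability-limited optimal $\mu$ from~\eqref{eq:mu}, and cancel the common factor $v_{\text{data}}B$ in the exponent. Your added justification of why the boundary value of $\mu$ from the condition $\alpha(t)\leq\beta_{\text{data}}(t)$ is the tightest admissible choice matches the paper's own (terse) reasoning preceding the lemma, so there is no substantive difference.
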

\subsection{Transmission delay}
Assume that there are $S$ data streams in total. The TX-SIM consists of $L$ metasurface layers, each containing $M$ meta-atoms, where $M \geq S$. Similarly, the RX-SIM is composed of $K$ metasurface layers, each with $N$ meta-atoms, and $N \geq S$. Under these conditions, the transmission delay $D_2$ can be expressed as\begin{equation}
\begin{split}
\begin{aligned}
D_2 = \max(\frac{d_{s,\text{TX}} + d_{A-\text{TX}}}{v_{\text{TX}}} + \frac{d_{B-\text{RX}} + d_{\text{RX},s}}{v_{\text{RX}}}),
\end{aligned}
\end{split}
\label{eq:D_2}
\end{equation}
where $v_{\text{TX}}$ and $v_{\text{RX}}$ denote the electromagnetic wave propagation speeds within the TX-SIM and RX-SIM, respectively. The term $d_{s,\text{TX}}$ represents the distance between the $s$-th data source and the $m$-th meta-atom in the input metasurface layer of the TX-SIM, while $d_{\text{RX},s}$ denotes the distance between the $n$-th meta-atom in the output metasurface layer of the RX-SIM and the $s$-th data source. The total intra-device propagation distances within the TX-SIM and RX-SIM are denoted by $d_{A\text{-}TX}$ and $d_{B\text{-}RX}$, respectively. The values of $d_{s,\text{TX}}$ and $d_{\text{RX},s}$ can be calculated using the following formula
\begin{equation}
\begin{split}
\begin{aligned}
a_{TX} =  \left( m_z - \frac{m_{\text{max}}^r + 1}{2} \right) r_{\text{tx}} - \left( s - \frac{S + 1}{2} \right) \frac{\lambda}{2},
\end{aligned}
\end{split}
\label{eq:a_TX}
\end{equation}
\begin{equation}
\begin{split}
\begin{aligned}
a_{RX} =   \left( n_z - \frac{n_{max}^r + 1}{2} \right) r_{rx} - \left( s - \frac{S + 1}{2} \right) \frac{\lambda}{2},
\end{aligned}
\end{split}
\label{eq:a_RX}
\end{equation}
\begin{equation}
\begin{split}
\begin{aligned}
b_{TX} = m_x - \frac{m_{max}^r + 1}{2}r_{tx}, \ b_{RX} = n_x - \frac{n_{max}^r + 1}{2}r_{rx},
\end{aligned}
\end{split}
\label{eq:b_TX}
\end{equation}
\begin{equation}
\begin{split}
\begin{aligned}
d_{s,\text{TX}} = \sqrt{ a_{TX}^2 
+ b_{TX} ^2 + d_t^2 }, \ d_{\text{RX},s} = \sqrt{a_{RX}^2 
+ b_{RX}^2 + d_r^2 },
\end{aligned}
\end{split}
\label{eq:d_s,TX}
\end{equation}

\noindent where $m_z = \lceil \frac{m}{m_{\max}^r} \rceil$ and $n_z = \lceil \frac{n}{n_{\max}^r} \rceil$ represent the z-axis indices of the $m$-th and $n$-th meta-atoms in the TX-SIM and RX-SIM, respectively. Similarly, let $m_x = \operatorname{mod}(m-1, m_{\max}^r) + 1$ and $n_x = \operatorname{mod}(n-1, n_{\max}^r) + 1$ represent their respective x-axis indices, $m_{\max}^r$ denotes the maximum number of meta-atoms per row on each metasurface layer of the TX-SIM, while $n_{\max}^r$ denotes the same for the RX-SIM. The parameters $r_{\text{tx}}$ and $r_{\text{rx}}$ represent the distances between adjacent meta-atoms on each metasurface layer of the TX-SIM and RX-SIM, respectively. Let $\lambda$ denote the signal wavelength. The vertical spacing between metasurface layers in the TX-SIM is given by $d_t = \frac{D_t}{L}$, where $D_t$ is the total thickness of the TX-SIM and $L$ is the number of metasurface layers. Similarly, $d_r = \frac{D_r}{K}$ denotes the vertical spacing between metasurface layers in the RX-SIM, where $D_r$ is the total thickness and $K$ is the number of layers.

Next, we calculate $d_{\text{A-TX}}$ and $d_{\text{B-RX}}$, where $d_{\text{A-TX}}$ denotes the transmission distance of the $s$-th data stream within the TX-SIM of $eVTOL_A$, and $d_{\text{B-RX}}$ denotes the transmission distance of the $s$-th data stream within the RX-SIM of $eVTOL_B$. The corresponding formulas are given as follows
\begin{equation}
\begin{aligned}
d_{A-RX} =\sum_{l=2}^{\mathrm{L}} d_{A-R X}^{\mathrm{l}}, \ d_{B-RX} =\sum_{k=2}^{\mathrm{K}} d_{B-R X}^{\mathrm{k}},
\end{aligned}
\label{_w_ripuewaigw}
\end{equation}
\begin{equation}
\mathrm{~d}_{\mathrm{A}-\mathrm{TX}}^{\mathrm{l}}=\sqrt{{\mathrm{r}_{\mathrm{m}_{\mathrm{A}}, \mathrm{~m}_{\mathrm{A}}^{\prime}}^l}^2+\mathrm{d}_{\mathrm{t}}^2} \quad  l \in [2,L],
\label{eq:d_B_RX_k}
\end{equation}
\begin{equation}
d_{B-R X}^{\mathrm{k}}=\sqrt{{\mathrm{r}_{\mathrm{n}_{\mathrm{B}}, \mathrm{~n}_{\mathrm{B}}^{\prime}}^k}^2+\mathrm{d}_{\mathrm{r}}^2} \quad k\in[2,K],
\label{_w_ripuewaigw}
\end{equation}
where $d_{A\text{-TX}}^{l}$ denotes the distance between the $m_A$-th meta-atom on the $(l{-}1)$-th layer and the $m_A^{\prime}$-th meta-atom on the $l$-th layer of the TX-SIM in $eVTOL_A$. Similarly, $d_{B\text{-RX}}^{k}$ represents the distance between the $n_B$-th meta-atom on the $(k{-}1)$-th layer and the $n_B^{\prime}$-th meta-atom on the $k$-th layer of the RX-SIM in $eVTOL_B$.
The term $\mathrm{r}_{m_A,m_A^{\prime}}^{1}$ denotes the horizontal spacing between the $m_A$-th and $m_A^{\prime}$-th meta-atoms on the same metasurface layer of $eVTOL_A$. Likewise, $\mathrm{r}_{n_B, n_B^{\prime}}^{k}$ denotes the spacing between the $n_B$-th and $n_B^{\prime}$-th meta-atoms on the same metasurface layer of $eVTOL_B$.
The expressions for $\mathrm{r}_{m_A, m_A^{\prime}}^{1}$ and $\mathrm{r}_{n_B,n_B^{\prime}}^{k}$ are given as follows:
\begin{equation}
\mathrm{r}_{\mathrm{m_A}, \mathrm{~m_A}^{\prime}}^1=\mathrm{r}_{\mathrm{tx}} \sqrt{\left(\mathrm{~m}_{\mathrm{z}}-\mathrm{m}_{\mathrm{z}}^{\prime}\right)^2+\left(\mathrm{m}_{\mathrm{x}}-\mathrm{m}_{\mathrm{x}}^{\prime}\right)^2},
\label{eq:r_mm_l}
\end{equation}
\begin{equation}
\mathrm{r}_{\mathrm{n_B}, \mathrm{~n_B}^{\prime}}^k=\mathrm{r}_{\mathrm{rx}} \sqrt{\left(\mathrm{~n}_{\mathrm{z}}-\mathrm{n}_{\mathrm{z}}^{\prime}\right)^2+\left(\mathrm{n}_{\mathrm{x}}-\mathrm{n}_{\mathrm{x}}^{\prime}\right)^2}.   
\label{eq:r_nn_k}
\end{equation}

Finally, we are ready to derive the upper bound which is a fixed constant of the transmission delay, as given in the following lemma.
\begin{lemma}[Upper Bound of Transmission Delay]
The upper bound of the transmission delay is given by:
\begin{equation}
\begin{aligned}
        D_2 &= \max(\frac{\sqrt{a_{TX}^2 
+ b_{TX} ^2 + d_t^2 } 
            + \sum_{l=2}^{L} \sqrt{r_{m_A, m_A'}^1 + d_t^2}}{v_{\text{TX}}} \\
        &\quad + \frac{\sqrt{a_{RX}^2 
+ b_{RX} ^2 + d_r^2 } 
            + \sum_{k=2}^{K} \sqrt{r_{n_B, n_B'}^2 + d_r^2}}{v_{\text{RX}}} ).
\end{aligned}
\label{eq_pdfD2}
\end{equation}
\end{lemma}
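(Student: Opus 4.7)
The plan is to obtain the claimed expression for $D_2$ by direct substitution into the structural equation~\eqref{eq:D_2}, which already decomposes the transmission delay into four geometric terms divided by the corresponding propagation speeds. Since the upper bound in the lemma is deterministic (not probabilistic), no stochastic arguments are needed; the work reduces to assembling the closed-form distances derived in equations~\eqref{eq:a_TX}--\eqref{eq:r_nn_k} and taking the maximum over the feasible source and meta-atom indices.

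First, I would recall that the transmission delay consists of four segments: (i) from the $s$-th data source to the first metasurface layer of the TX-SIM, covered at speed $v_{\text{TX}}$ over distance $d_{s,\text{TX}}$; (ii) the intra-device hops inside the TX-SIM from layer to layer, with total distance $d_{A\text{-TX}}$; (iii) the intra-device hops inside the RX-SIM, with total distance $d_{B\text{-RX}}$; and (iv) from the last layer of the RX-SIM to the $s$-th receive port, of length $d_{\text{RX},s}$. Using the FIFO and synchronous-layer assumptions from Section~\ref{sec:System Model}, the worst stream determines $D_2$, which is why the maximum appears outside.

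Second, I would substitute the precomputed geometric closed forms. For the source-to-first-layer and last-layer-to-port distances, equations~\eqref{eq:a_TX}--\eqref{eq:d_s,TX} directly give
\begin{equation*}
d_{s,\text{TX}} = \sqrt{a_{TX}^2 + b_{TX}^2 + d_t^2}, \quad d_{\text{RX},s} = \sqrt{a_{RX}^2 + b_{RX}^2 + d_r^2}.
\end{equation*}
For the intra-SIM distances, equation~\eqref{_w_ripuewaigw} expresses $d_{A\text{-TX}}$ and $d_{B\text{-RX}}$ as sums over layers, and equations~\eqref{eq:d_B_RX_k}--\eqref{eq:r_nn_k} give each per-layer hop as the Pythagorean combination of the in-plane spacing $r_{m_A,m_A'}^{l}$ (respectively $r_{n_B,n_B'}^{k}$) and the inter-layer spacing $d_t$ (respectively $d_r$). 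Plugging these into~\eqref{eq:D_2} and collecting the TX-side and RX-side contributions over their respective propagation speeds yields exactly the expression stated in~\eqref{eq_pdfD2}.

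The main obstacle is not analytical but notational: keeping the indices $(s, m_A, m_A', n_B, n_B')$ and the ranges of $l$ and $k$ consistent so that the outer $\max$ is applied over the correct set of source streams and inter-layer routing choices. I would finish by noting that because every constituent distance is a deterministic function of the SIM geometry and the stream assignment, the resulting $D_2$ is a fixed constant once the layout is fixed, which justifies treating it as a hard upper bound rather than a probabilistic one in the subsequent end-to-end delay analysis.
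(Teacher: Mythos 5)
Your proposal is correct and follows essentially the same route as the paper, whose proof is the one-line observation that the result follows by combining \eqref{eq:D_2} with the geometric distance formulas \eqref{eq:a_TX}--\eqref{eq:r_nn_k}; you simply spell out the substitution of $d_{s,\text{TX}}$, $d_{\text{RX},s}$, $d_{A\text{-}TX}$, and $d_{B\text{-}RX}$ in more detail. No gap.
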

\begin{proof}
The result in (\ref{eq_pdfD2}) is obtained by combining ~\eqref{eq:D_2}--\eqref{eq:r_nn_k}.
\end{proof}

\subsection{Propagation Delay}
The upper probabilistic bound of the propagation delay \( \bar{D}_3(t) \) for transmitting \( S \) packets in the HMIMO channel can be expressed in the following lemma.
\begin{lemma}[Upper Probabilistic Bound of Propagation Delay] The upper probabilistic bound of propagation delay is given by:
\begin{align}
P\{D_3>t_d\}\leq\bar{D}_3(t_d) .
\end{align}
\end{lemma}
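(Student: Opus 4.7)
The plan is to apply the stochastic network calculus (SNC) delay bound theorem to the HMIMO wireless channel, viewing it as the server that handles the output of the TX-SIM buffer. First, I would characterize the arrival process feeding the HMIMO channel. By assumptions (ii)--(iii), the TX-SIM buffer is stable and operates FIFO, so the stream entering the HMIMO channel at time $t$ is simply the generating Poisson stream delayed by the deterministic lag $D_2$. Since a time shift does not alter the moment generating function, the same MGF-based calculation used in Lemma~\ref{theorm_barD1} yields a stochastic arrival curve $\langle f_d,\alpha_d\rangle$ with $\alpha_d(t)=\delta_d l_d t/(1-\mu l_d)$ and bounding function $f_d(x)=e^{-\mu x}$, where $\mu$ is given by~\eqref{eq:mu}.

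Second, I would model the HMIMO channel as a constant-rate server with service curve $\beta_d(t)=v_{\text{data}}Bt$, of the same form as $\beta_{\text{data}}(t)$ in Lemma~\ref{theorm_barD1}. Because the rate $v_{\text{data}}B$ is set by the SIM phase configuration and does not fluctuate once fixed, the associated service bounding function $g_d(x)$ can be taken as identically zero for $x\geq 0$, i.e., the service is deterministic in the SNC sense.

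Third, I would invoke the SNC FIFO delay bound: for $A\sim\langle f_d,\alpha_d\rangle$ feeding $S\sim\langle g_d,\beta_d\rangle$, the virtual delay satisfies $P\{D_3>t_d\}\leq f_d\otimes g_d\bigl(\beta_d(t_d)-\alpha_d(0)\bigr)$. With $g_d\equiv 0$, $\alpha_d(0)=0$, and $\beta_d(t_d)=v_{\text{data}}Bt_d$, the min-plus convolution collapses to $f_d$ evaluated at $v_{\text{data}}Bt_d$, giving
\begin{equation*}
\bar{D}_3(t_d)=e^{-\mu\, v_{\text{data}} B\, t_d},
\end{equation*}
which mirrors the closed form obtained in~\eqref{eq_pdfD1}.

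The main obstacle will be justifying that the HMIMO-side arrival process genuinely preserves the Poisson/exponential structure across the deterministic TX-SIM transmission delay $D_2$ and, more importantly, across the upstream buffer, so that the same parameter $\mu$ from~\eqref{eq:mu} remains the minimizer of the arrival bounding function at the HMIMO input. This will require a short invariance argument exploiting the time-shift invariance of the MGF together with the stability condition $\alpha(t)\leq\beta(t)$ inherited from assumption (iii). A secondary subtlety is confirming that the infimum in the min-plus convolution is indeed attained at $\tau=0$ when the service bounding function is degenerate, which is precisely what allows the bound to collapse to the clean exponential form above rather than remain an infimum over $\tau$.
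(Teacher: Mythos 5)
There is a genuine gap here: you have mis-identified what $D_3$ is. In this paper the propagation delay is \emph{not} the virtual delay of a second SNC server fed by the buffer's output; it is the raw transmission time of the $S$ packets over the HMIMO channel, $D_3 = S\,l_d/(v_{\text{data}}B)$, whose only randomness is the exponentially distributed packet size. The paper's proof is accordingly a one-line tail computation: $P\{D_3>t_d\}=P\{l_d>v_{\text{data}}Bt_d/S\}=\int_{v_{\text{data}}Bt_d/S}^{\infty}f_L(l)\,dl=e^{-\frac{v_{\text{data}}B}{S l_d}t_d}$, and this explicit form is exactly what is plugged into the subsequent end-to-end theorem. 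No arrival curve, service curve, or min-plus convolution is involved; the queueing at the channel entrance has already been charged to $D_1$, so rerunning the FIFO delay-bound theorem at the channel, as you propose, would double-count that queueing and bound a different random variable.

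Concretely, your construction yields $\bar{D}_3(t_d)=e^{-\mu v_{\text{data}}Bt_d}$ with exponent $\frac{(v_{\text{data}}B-\delta_d l_d)t_d}{l_d}$, whereas the paper's bound has exponent $\frac{v_{\text{data}}Bt_d}{S l_d}$. These do not coincide, and your expression is not in general an upper bound on $P\{S l_d/(v_{\text{data}}B)>t_d\}$: that would require $S(v_{\text{data}}B-\delta_d l_d)\le v_{\text{data}}B$, which fails for typical parameters. It would also break the downstream theorem, whose infimum is taken over the sum of the two specific bounding functions $e^{-\frac{v_{\text{data}}B-\delta_d l_d}{l_d}t_b}$ and $e^{-\frac{v_{\text{data}}B}{S l_d}t_d}$. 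The fix is simply to read off the definition of $D_3$ as a transmission time and evaluate the complementary CDF of the exponential packet length at $v_{\text{data}}Bt_d/S$; the invariance and convolution subtleties you flag do not arise.
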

\begin{proof}The upper bound of $D_3$ is computed as follows:
\begin{equation}
\begin{aligned}
P\{D_3 > t_d\}= P\left\{ \frac{S l_d}{v_{\text{data}} B} > t_d \right\} 
= P\left\{ l_d > \frac{v_{\text{data}} B t_d}{S} \right\}.
\end{aligned}
\end{equation}
Let \( f_L(l) \) be the probability density function (PDF) of \( l_d \), the bounding function of $t_d$ is
\begin{equation}
\begin{aligned}
\label{eq_pdfD3}
\bar{D}_3(t_d) 
&= \int_{\frac{v_{\text{data}} B t_d}{S}}^{\infty} f_L(l) \, dl =e^{-\frac{v_{\text{data}} B}{S l_d} t_d}.
\end{aligned}
\end{equation}
\end{proof}
    
Now that the upper probabilistic bounds of \( D_1 \) and \( D_3 \) have been derived, we proceed to present the total upper probabilistic bound of the end-to-end delay in the following theorem. To facilitate the derivation, we first introduce a prerequisite lemma.

\begin{lemma}
Let \( Z = \sum_{i=1}^N X_i \), where \( Z \) and \( X_i \) are random variables. Then, the complementary cumulative distribution function (CCDF) of \( Z \) satisfies the following inequality
\begin{equation}
\bar{F}_Z(x) \leq \bar{F}_{X_1} \otimes \bar{F}_{X_2} \otimes \cdots \otimes \bar{F}_{X_N}(x),
\end{equation}
where \( \bar{F}_Z(x) \) denotes the CCDF of \( Z \) with respect to the variable \( x \), and \( \otimes \) represents the min-plus convolution operator.
\label{lemma_CCDF}
\end{lemma}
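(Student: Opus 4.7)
The plan is to prove this by induction on $N$, with both the base case and the induction step relying on elementary properties of the min-plus convolution on the class $F^c$ of non-negative, non-increasing functions (which contains every CCDF). The base case $N=2$ follows from a union bound combined with the definition of $\otimes$ given in the excerpt; the inductive step then uses associativity of $\otimes$ together with its pointwise monotonicity.

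For the base case, I would fix $x \geq 0$ and any $\tau \in [0,x]$, and establish the key set inclusion $\{X_1 + X_2 > x\} \subseteq \{X_1 > \tau\} \cup \{X_2 > x - \tau\}$. This inclusion holds because if $X_1 \leq \tau$ and $X_2 \leq x - \tau$ simultaneously, then $X_1 + X_2 \leq x$, contradicting the left-hand event. Applying the union bound therefore gives $\bar{F}_Z(x) \leq \bar{F}_{X_1}(\tau) + \bar{F}_{X_2}(x - \tau)$, and since $\tau$ was arbitrary, taking the infimum over $\tau \in [0,x]$ recovers exactly $(\bar{F}_{X_1} \otimes \bar{F}_{X_2})(x)$ as defined earlier in the paper.

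For the inductive step, I would set $Y = X_1 + \cdots + X_{N-1}$ and write $Z = Y + X_N$. Applying the base case to this pair yields $\bar{F}_Z \leq \bar{F}_Y \otimes \bar{F}_{X_N}$. The induction hypothesis supplies $\bar{F}_Y \leq \bar{F}_{X_1} \otimes \cdots \otimes \bar{F}_{X_{N-1}}$ pointwise, and because $\otimes$ is monotone in each argument over $F^c$, this inequality can be substituted inside the convolution with $\bar{F}_{X_N}$. Associativity of min-plus convolution then collapses the grouping and produces the claimed $N$-fold convolution bound.

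The hard part will be mostly bookkeeping rather than substantive: I need to verify monotonicity of $\otimes$ on $F^c$ (immediate from the infimum definition) and associativity (a standard min-plus algebra identity, but worth a brief citation). One genuine subtlety is that the excerpt defines $f \otimes g(t) = \inf_{0 < \tau \leq t}\{f(\tau) + g(t-\tau)\}$ with the open left endpoint, while the union-bound derivation naturally ranges over $\tau \in [0,t]$; since every CCDF is right-continuous and non-increasing, the two infima coincide, so this discrepancy does not create a gap. No measurability obstacles arise since only a single union of two events and the monotone-limit passage in the infimum are involved.
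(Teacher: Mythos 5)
Your proof is correct. Note, however, that the paper itself gives no proof of this lemma at all: it is stated bare, implicitly resting on the two-variable fact quoted in the related-work section ($P(X+Y>x)\le f(x)\otimes g(x)$, cited to the network-calculus literature). So there is nothing in the paper to match your argument against; what you have written is a self-contained justification of both that two-variable fact and its $N$-fold extension. The base case via the inclusion $\{X_1+X_2>x\}\subseteq\{X_1>\tau\}\cup\{X_2>x-\tau\}$ and the union bound is exactly the standard argument, and the inductive step via monotonicity and associativity of $\otimes$ is sound. One small simplification: your worry about the open left endpoint in the paper's definition $f\otimes g(t)=\inf_{0<\tau\le t}\{f(\tau)+g(t-\tau)\}$ does not require right-continuity of the CCDFs. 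The pointwise bound $\bar{F}_Z(x)\le\bar{F}_{X_1}(\tau)+\bar{F}_{X_2}(x-\tau)$ holds for every $\tau\in(0,x]$ individually, so the infimum over that (smaller) set is automatically a valid upper bound; whether it coincides with the infimum over $[0,x]$ is irrelevant to the direction of the inequality being proved. With that trimmed, the argument is complete and would be a worthwhile addition to the paper, which currently asserts the lemma without proof.
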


Based on Lemma~\ref{lemma_CCDF} and the statistical properties of \( D_1 \), \( D_2 \), and \( D_3 \), we now establish the upper probabilistic bound of the total end-to-end delay as follows.

\begin{theorem}[Upper probabilistic bound of $D$]
According to Lemma~\ref{lemma_CCDF}, the upper probability bound of the total delay is 
\begin{equation}
\begin{split}
\begin{aligned}
& \bar{D} =\inf _{\mathrm{t}_1+\mathrm{t}_3=\mathrm{T}-\mathrm{D}_2}\left\{\mathrm{e^{-\frac{v_{\text{data}} B - \delta_d l_d}{l_d} t_b}+\mathrm{e^{-\frac{v_{\text{data}} B }{S l_d}t_d}}}\right\}.
\end{aligned}
\end{split}
\label{vuyresgdfsuuysdriv}
\end{equation} 
\end{theorem}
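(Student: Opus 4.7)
The plan is to leverage Lemma~\ref{lemma_CCDF} together with the two exponential tail bounds already established, namely $\bar{D}_1(t_b) = e^{-\frac{v_{\text{data}} B - \delta_d l_d}{l_d} t_b}$ from \eqref{eq_pdfD1} and $\bar{D}_3(t_d) = e^{-\frac{v_{\text{data}} B}{S l_d} t_d}$ from \eqref{eq_pdfD3}, while treating the transmission delay $D_2$ as a deterministic constant (since the preceding lemma showed it depends only on fixed geometric parameters of the TX--SIM and RX--SIM stacks). The starting observation will be that, because $D_2$ is deterministic,
\begin{equation}
P\{D > T\} \;=\; P\{D_1 + D_3 > T - D_2\},
\end{equation}
so the tail of the total delay reduces to bounding the tail of the sum of the two \emph{random} components only.

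I would then apply Lemma~\ref{lemma_CCDF} with $N=2$, $X_1 = D_1$, and $X_2 = D_3$, which yields
\begin{equation}
P\{D_1 + D_3 > T - D_2\} \;\leq\; \bar{F}_{D_1} \otimes \bar{F}_{D_3}(T - D_2).
\end{equation}
Substituting the closed-form bounds for $\bar{F}_{D_1}$ and $\bar{F}_{D_3}$ and unfolding the definition $f \otimes g(t) = \inf_{0 < \tau \leq t}\{f(\tau) + g(t-\tau)\}$ under the reparameterization $t_b = \tau$, $t_d = T - D_2 - \tau$, produces precisely the infimum expression claimed in the theorem. Conceptually the union-bound inclusion $\{D_1 + D_3 > T - D_2\} \subseteq \{D_1 > t_b\} \cup \{D_3 > t_d\}$ valid for any split with $t_b + t_d = T - D_2$ is what underlies the min-plus convolution step, and taking the infimum over admissible splits gives the tightest bound obtainable from the individual CCDFs alone.

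The main obstacle is not analytical but rather bookkeeping and regularity. I must verify that the admissible region for the split is well-posed, namely $t_b, t_d \geq 0$ with $t_b + t_d = T - D_2$, which in turn requires the mild admissibility condition $T \geq D_2$; otherwise the bound becomes vacuous. A secondary subtlety is ensuring that the stability hypothesis used in deriving \eqref{eq:mu} (i.e.\ $v_{\text{data}} B > \delta_d l_d$) is maintained, so that the exponent in $\bar{D}_1$ is indeed positive and the tail bound is nontrivial; the FIFO and bounded-queue assumptions stated in Section~\ref{sec:System Model} already guarantee this. Once these conditions are acknowledged, the theorem follows by direct substitution, and no further concentration, moment-generating-function, or independence argument is required beyond what is already encapsulated in Lemma~\ref{lemma_CCDF}.
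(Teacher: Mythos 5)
Your proposal is correct and follows essentially the same route as the paper: both reduce the tail of $D$ to that of $D_1+D_3$ using the determinism of $D_2$, invoke Lemma~\ref{lemma_CCDF} to bound the CCDF of the sum by the min-plus convolution of the individual CCDFs, and substitute the exponential bounds \eqref{eq_pdfD1} and \eqref{eq_pdfD3} to obtain the infimum over splits $t_b+t_d=T-D_2$. Your added remarks on the admissibility condition $T\geq D_2$ and the stability condition $v_{\text{data}}B>\delta_d l_d$ are sensible points the paper leaves implicit, but they do not change the argument.
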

\begin{proof}
The upper bound $\bar{D}$ is computed as
\begin{equation}
\begin{split}
\begin{aligned}
& \bar{D} = P\{\mathrm{D}>\mathrm{t}\}=\mathrm{P}\left\{\mathrm{D}_1(\mathrm{t})+\mathrm{D}_3(\mathrm{t})>\mathrm{T}-\mathrm{D}_2\right\} \\& \leq \mathrm{f}_{\mathrm{D}_1}\left(\mathrm{t}_{\mathrm{b}}\right) \otimes \mathrm{f}_{\mathrm{D}_3}\left(\mathrm{t}_{\mathrm{d}}\right),
\end{aligned}
\end{split}
\label{vuyresgdfsuuysdriv}
\end{equation} 
proof completes by substituting the CCDF of $D_1$ and $D_3$, as given in (\ref{eq_pdfD1}) and (\ref{eq_pdfD3}), respectively.
\end{proof}

We now derive the upper probabilistic bound of the delays associated with buffering, radio propagation, and transmission, and express the upper probabilistic bound of the total delay as a function of $v_{\rm data}$ and $t_d$, which are key parameters related to SIM-assisted transmission. In the following section, we analyze the optimal values of $v_{\rm data}$ and $t_d$ that minimize the total delay  $T$ and the total end-to-end probabilistic bound.

\section{SIM-based Performance Optimization}
\label{sec:SIM Performance Analysis}
Following the derivation of the probabilistic upper bound on the total end-to-end delay $\bar{D}$, a critical aspect remains to be addressed: the determination of the data transmission rate within the HMIMO channel. This transmission rate, denoted by $v_{\text{data}}$, directly influences the propagation delay $t_d$ and serves as a fundamental parameter in the performance evaluation of the proposed communication system. In the subsequent analysis, we derive an analytical expression for $v_{\text{data}}$.

The transmission rate in the HMIMO channel can be characterized to provide a quantitative basis for subsequent delay analysis~\cite{10534211}. Let $\theta_m^l \in [0, 2\pi)$ denote the phase shift of the $m$-th meta-atom in the $l$-th metasurface layer of the TX-SIM, where $m \in [1, M]$ and $l \in [1, L]$. The corresponding transmission coefficient is given by $\phi_m^l = e^{j\theta_m^l}$. The phase shift vector of the $l$-th layer is defined as $\boldsymbol{\phi}^1 = \left[\phi_1^1, \phi_2^1, \ldots, \phi_M^1\right]^{\mathrm{T}} \in \mathbb{C}^{M \times 1},$ and the associated transmission coefficient matrix is expressed as
$\mathbf{\Phi}^1 = \mathrm{diag}[\Phi^1] \in \mathbb{C}^{M \times M}$.
Similarly, let $\xi_n^k \in [0, 2\pi)$ denote the phase shift of the $n$-th meta-atom in the $k$-th layer of the RX-SIM, where $n \in [1, N]$ and $k \in [1, K]$. The corresponding transmission coefficient is given by $\psi_n^k = e^{j\xi_n^k}$. The phase shift vector of the $k$-th layer is defined as $ \boldsymbol{\psi}^k = \begin{bmatrix}
    \psi_1^k, \psi_2^k, ..., \psi_N^k
\end{bmatrix}^T \in \mathbb{C}^{N \times 1}$ and the corresponding transmission coefficient matrix is $\mathbf{\Psi}^1 = \text{diag}[\boldsymbol{\psi}^k] \in \mathbb{C}^{N \times N}$.

According to~\cite{10158690}, the transmission coefficient from the $m_A'$-th meta-atom in the $(l{-}1)$-th layer to the $m_A$-th meta-atom in the $l$-th layer of the TX-SIM carried by $eVTOL_A$ is given by
\begin{equation}
\mathrm{w}_{\mathrm{m}, \mathrm{~m}^{\prime}}^1=\frac{\mathrm{C}_{\mathrm{t}} \cos \chi_{\mathrm{m}, \mathrm{~m}^{\prime}}^1}{d_{A-TX}^l}\left(\frac{1}{2 \pi d_{A-TX}^l}-\mathrm{j} \frac{1}{\lambda}\right) \mathrm{e}^{\frac{\mathrm{j} 2 \pi d_{A-TX}^l }{\lambda} }, 
\end{equation}
where $d_{A\text{-TX}}^l$ denotes the distance between the $m$-th meta-atom in the $l$-th metasurface layer and the $m'$-th meta-atom in the $(l{-}1)$-th layer, with $l \in [2, L]$. Here, $C_t$ is the effective area of each meta-atom in the TX-SIM, and $\chi_{m,, m'}^l$ is the angle between the wave propagation direction from the $(l{-}1)$-th layer and the normal vector of the $l$-th metasurface layer.

Let $\mathbf{W}^l \in \mathbb{C}^{M \times M}$ denote the transmission coefficient matrix that characterizes the electromagnetic interaction between the $l$-th and $(l{-}1)$-th metasurface layers of the TX-SIM, for $l \in [2, L]$. Specifically, $\mathbf{W}^1 \in \mathbb{C}^{M \times S}$ represents the transmission coefficient matrix mapping the $S$ input data streams to the first metasurface layer. Accordingly, the overall transformation implemented by the TX-SIM can be expressed as:
\begin{equation}
\mathbf{X} = \mathbf{\Phi}^L \mathbf{W}^L\mathbf{\Phi}^{L-1} \mathbf{W}^{L-1} \dots \mathbf{\Phi}^2 \mathbf{W}^2 \mathbf{\Phi}^1 \mathbf{W}^1 \in \mathbb{C}^{M \times S}.
\end{equation}

Similarly, for the RX-SIM, the transmission coefficient from the $n'$-th meta-atom in the $(k-1)$-th metasurface layer to the $n$-th meta-atom in the $k$-th layer is given by
\begin{equation}
u_{n, n'}^k = \frac{C_r \cos \zeta_{n, n'}^k}{d_{\rm B-RX}^k}
\left( \frac{1}{2 \pi d_{\rm B-RX}^k} - \frac{1}{j \lambda} \right)
e^{j 2 \pi d_{B-RX}^k/\lambda},
\end{equation}
where $d_{B\text{-RX}}^k$ denotes the transmission distance between the $n'$-th meta-atom in the $(k{-}1)$-th metasurface layer and the $n$-th meta-atom in the $k$-th layer. The parameter $C_r$ represents the effective area of each unit meta-atom in the RX-SIM, and $\zeta_{n, n'}^k$ denotes the angle between the propagation direction from the $(k{-}1)$-th layer and the normal vector of the $k$-th metasurface layer.

Let $\mathbf{U}^k \in \mathbb{C}^{N \times N}$ denote the transmission coefficient matrix representing the electromagnetic coupling between the $(k{-}1)$-th and $k$-th metasurface layers within the RX-SIM, for $k \in [2, K]$. Specifically, $\mathbf{U}^K \in \mathbb{C}^{S \times N}$ denotes the transmission coefficient matrix that maps the output of the final metasurface layer to the $S$ output data streams. Accordingly, the overall transformation induced by the RX-SIM can be expressed as:
\begin{equation}
\mathbf{Y} = \mathbf{U}^K \mathbf{\Psi}^K \mathbf{U}^{K-1} \mathbf{\Psi}^{K-1} \dots \mathbf{U}^2 \mathbf{\Psi}^2 \mathbf{U}^1 \mathbf{\Psi}^1  \in \mathbb{C}^{S \times N}.
\end{equation}

Consequently, the end-to-end channel matrix in HMIMO channel is given by
\begin{equation}
\mathbf{H} = \mathbf{Y} \mathbf{G} \mathbf{X},
\end{equation}
where $\mathbf{G} = \mathbf{R}_{\mathrm{RX}}^{1/2}\bar{\mathbf{G}}\mathbf{R}_{\mathrm{TX}}^{1/2} \in \mathbb{C}^{N \times M}$ is the channel matrix as described in~\cite{9714406}, $\bar{\mathbf{G}}$ represents an independent and identically distributed (i.i.d.) Rician fading channel. The matrices $\mathbf{R}_{\mathrm{TX}} \in \mathbb{C}^{M \times M}$ and $\mathbf{R}_{\mathrm{RX}} \in \mathbb{C}^{N \times N}$ denote the spatial correlation matrices at the TX-SIM and RX-SIM, respectively. The explicit forms of these correlation matrices are provided in~\cite{9716880} as:
\begin{equation}
\begin{aligned}
\left[\mathbf{R}_{\mathrm{TX}}\right]_{\mathrm{m_A},\mathrm{m_A}^{\prime}}
&= \operatorname{sinc}\left( \frac{2 \mathrm{r}_{\mathrm{m_A}, \mathrm{m_A}^{\prime}}^l}{\lambda} \right),
\quad \mathrm{m}, \mathrm{m}^{\prime} \in [1, M], \\
\left[\mathbf{R}_{\mathrm{RX}}\right]_{\mathrm{n_B},\mathrm{n_B}^{\prime}}
&= \operatorname{sinc}\left( \frac{2 \mathrm{r}_{\mathrm{n_B}, \mathrm{n_B}^{\prime}}^k}{\lambda} \right),
\quad \mathrm{n}, \mathrm{n}^{\prime} \in [1, N].
\end{aligned}
\end{equation}

The channel matrix describing the electromagnetic propagation from the output metasurface layer of the TX-SIM to the input metasurface layer of the RX-SIM can be expressed as
\begin{equation}
\mathbf{\bar G}^{\mathrm{H}}=\sqrt{\gamma}\left(\sqrt{\frac{\kappa}{1+\kappa}} \mathbf{h}_{\mathrm{LoS}}^{\mathrm{H}}+\sqrt{\frac{1}{1+\kappa}} \mathbf{h}_{\mathrm{NLoS}}^{\mathrm{H}}\right) \in \mathbb{C}^{N \times M},
\end{equation}
where $\kappa$ is the Rician factor, $\gamma=10^{-\frac{PL(d_{eVTOL})}{10}}$ is the channel gain, $d_{eVTOL}$ is the distance between eVTOL$_A$ and eVTOL$_B$. $\mathbf{h_{LoS}^H} \in \mathbb{C}^{N \times M} $ is the Line-of-Sight (LoS)  coefficient and $\mathbf{h_{NLoS}^H} \in \mathbb{C}^{N \times M} $ is the Non LoS (NLoS) coefficient. ${\rm PL(d_{eVTOL})}$ can be provided by
\begin{align}
&{\rm PL}(d_{\rm eVTOL_A}) = \text{PL}(d_0) + 10\eta \log_{10}\left(\frac{d_{\text{eVTOL}}}{d_0}\right),\nonumber\\
&{\rm PL}(d_0) = 20 \log_{10}\left(\frac{4\pi d_0}{\lambda}\right), \ \
\lambda = \frac{c}{f},
\end{align}
where $\eta$ denotes the path loss exponent, $\text{PL}(d_0)$ represents the reference path loss at a distance $d_0$, and $d_0$ is the reference distance. The parameter $\lambda$ is the carrier wavelength, $f$ is the operating frequency, and $c$ is the speed of light in free space, with $c = 3 \times 10^8$ m/s.

The received channel vector is modeled as
\begin{equation}
    \mathbf{y} = \mathbf{H} \mathbf{x} + \mathbf{n},
\end{equation}
where $\mathbf{x} \in \mathbb{C}^{S \times 1}$ denotes the transmitted signal vector, and $\mathbf{n} \in \mathbb{C}^{S \times 1}$ is the additive noise vector, modeled as a circularly symmetric complex Gaussian random vector, i.e., $\mathbf{n} \sim \mathcal{CN}(0, N_0 \mathbf{I})$, where $N_0$ denotes the noise power spectral density, and the matrix $\mathbf{H} \in \mathbb{C}^{S \times S}$ represents the channel matrix.

Suppose the average transmission power is constrained by $\mathbb{E} \{\mathbf{x}^H \mathbf{x} \} \leq P$, where $P$ denotes the total transmission power, the achievable transmission data can be derived in the following theorem.
\begin{theorem}[Achievable Transmission Data Rate]
Given the HMIMO channel matrix $\mathbf{H} \in \mathbb{C}^{S \times S}$ and the identity matrix $\mathbf{I} \in \mathbb{C}^{S \times S}$, the achievable  transmission data rate $v_{\text{data}}$ as given in~\cite{10534211} can be expressed as
\begin{equation}
\begin{aligned}
v_{\text{data}} = \log_2 \det\left( \mathbf{I} + \frac{\mathbf{H} \mathbf{Z} \mathbf{H}^H}{N_0 B} \right),
\end{aligned}
\end{equation}
where $ \mathbf{Z} = \mathbb{E} \{\mathbf{x} \mathbf{x}^H \} \in \mathbb{C}^{S \times S}$ is a positive semi-definite matrix denotes the covariance matrix of the transmitted signal $\mathbf{x}$.
\end{theorem}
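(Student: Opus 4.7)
The plan is to derive the expression as the mutual information between the input vector $\mathbf{x}$ and the output vector $\mathbf{y}$ under the given second-order statistics, following the standard MIMO capacity argument adapted to the HMIMO channel $\mathbf{H}$ produced by the cascaded SIM layers. First I would start from the input--output relation $\mathbf{y} = \mathbf{H}\mathbf{x} + \mathbf{n}$, observe that $\mathbf{n}$ is independent of $\mathbf{x}$ with $\mathbf{n}\sim\mathcal{CN}(\mathbf{0}, N_0 B \mathbf{I})$ after accounting for the system bandwidth $B$, and write
\begin{equation}
I(\mathbf{x};\mathbf{y}) = h(\mathbf{y}) - h(\mathbf{y}\mid\mathbf{x}) = h(\mathbf{y}) - h(\mathbf{n}),
\end{equation}
so that the noise entropy $h(\mathbf{n}) = \log_2\det(\pi e N_0 B \mathbf{I})$ is a fixed reference.

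Next I would compute the covariance of the output under the power constraint $\mathbb{E}\{\mathbf{x}^H\mathbf{x}\}\le P$ with $\mathbf{Z}=\mathbb{E}\{\mathbf{x}\mathbf{x}^H\}$, yielding
\begin{equation}
\mathbf{K}_{\mathbf{y}} = \mathbf{H}\mathbf{Z}\mathbf{H}^H + N_0 B \mathbf{I}.
\end{equation}
The key inequality I would invoke is the maximum-entropy property of the circularly symmetric complex Gaussian distribution: among all random vectors with a given covariance, the Gaussian maximizes differential entropy. Since $\mathbf{n}$ is already Gaussian, choosing $\mathbf{x}\sim\mathcal{CN}(\mathbf{0},\mathbf{Z})$ makes $\mathbf{y}$ Gaussian and attains $h(\mathbf{y}) = \log_2\det(\pi e \mathbf{K}_{\mathbf{y}})$. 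Substituting back yields
\begin{equation}
I(\mathbf{x};\mathbf{y}) = \log_2\det\!\bigl(\mathbf{H}\mathbf{Z}\mathbf{H}^H + N_0 B \mathbf{I}\bigr) - \log_2\det(N_0 B \mathbf{I}).
\end{equation}

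Finally, I would factor out $N_0 B \mathbf{I}$ from the first determinant using the multiplicative property $\det(AB)=\det(A)\det(B)$ to obtain the claimed expression $v_{\text{data}} = \log_2\det\bigl(\mathbf{I} + \mathbf{H}\mathbf{Z}\mathbf{H}^H/(N_0 B)\bigr)$. The routine parts are the entropy calculations and the determinant manipulation; the only subtlety worth flagging is the bandwidth normalization, since the problem statement lists the noise spectral density as $N_0$ but the achievable rate is expressed on a per-Hz basis, so the noise covariance must be written as $N_0 B\,\mathbf{I}$ before the cancellation. I do not anticipate a significant obstacle beyond carefully tracking this normalization and verifying that the positive semi-definiteness of $\mathbf{Z}$ suffices to keep every determinant strictly positive, so that the logarithm is well defined; the result then follows as a direct specialization of the standard MIMO capacity formula to the end-to-end HMIMO channel $\mathbf{H} = \mathbf{Y}\mathbf{G}\mathbf{X}$ induced by the TX-SIM and RX-SIM cascades.
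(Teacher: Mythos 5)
Your proposal is correct: it is the standard mutual-information derivation of the MIMO achievable rate (Gaussian input of covariance $\mathbf{Z}$, maximum-entropy bound on $h(\mathbf{y})$, then factoring $N_0 B\,\mathbf{I}$ out of the determinant), and nothing in it would fail. The paper itself gives no proof of this theorem at all --- it simply states the formula and cites an external reference --- so there is no in-paper argument to compare against; your derivation supplies exactly the justification the paper omits. One point worth highlighting in your favor: you correctly flagged the bandwidth normalization, which is in fact a latent inconsistency in the paper, since the system model declares $\mathbf{n}\sim\mathcal{CN}(\mathbf{0},N_0\mathbf{I})$ while the rate formula divides by $N_0 B$; your step of taking the noise covariance to be $N_0 B\,\mathbf{I}$ (total noise power over the band) is the reading that makes the stated formula come out, and making that choice explicit is necessary for the cancellation $\log_2\det(\mathbf{H}\mathbf{Z}\mathbf{H}^H+N_0B\,\mathbf{I})-\log_2\det(N_0B\,\mathbf{I})=\log_2\det\bigl(\mathbf{I}+\mathbf{H}\mathbf{Z}\mathbf{H}^H/(N_0B)\bigr)$ to be legitimate. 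Your closing remark on positivity is also easily discharged: $\mathbf{H}\mathbf{Z}\mathbf{H}^H\succeq 0$ plus $N_0B\,\mathbf{I}\succ 0$ makes $\mathbf{K}_{\mathbf{y}}$ positive definite, so all determinants and logarithms are well defined.
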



\section{ Problem Formulation and Optimization}
\label{sec:Problem Formulation and Optimization}
\subsection{\textit{Problem Formulation}}
This section aims to maximize the data transmission rate $v_{\text{data}}$ in the HMIMO channel while minimizing the propagation delay $t_d$, thereby achieving the joint objective of reducing both the probabilistic delay upper bound $\bar{D} = P\{D > T\}$ and the total end-to-end delay, $T$. Given that the transmission delay, $D_2$ is fixed, and assuming an expected queuing delay denoted by $t_b$, the total end-to-end delay can be expressed as:
\begin{equation}
\begin{aligned}
\begin{gathered}
\mathcal{P}1:\min_{v_{\text{data}},\, t_d} \quad 
f(v_{\text{data}}, t_d) = \mathrm{e^{-\frac{v_{\text{data}} B - \delta_d l_d}{l_d} t_b}+\mathrm{e^{-\frac{v_{\text{data}} B }{S l_d}t_d}}}
+  t_d\\
\text { s.t. } \ \ \  \mathrm{t}_{\mathrm{d}}=\mathrm{T}-\mathrm{D}_2-\mathrm{t}_{\mathrm{b}}, \\
\theta_{\mathrm{m}}^1 \in(0,2 \pi],  \ \xi_{\mathrm{n}}^{\mathrm{k}} \in(0,2 \pi].
\end{gathered}
\end{aligned}
\end{equation}
Note that, to simplify the optimization problem, we assume that the transmit power is uniformly allocated among all data streams.

To address the complexity of this non-convex optimization problem, we adopt a hybrid solution strategy. First, the BCD method is employed to decompose the problem, thereby reducing its computational difficulty by alternately optimizing over different subsets of variables. Next, the SDR technique is applied to convert the resulting non-convex subproblems into convex formulations, which can be efficiently solved via Semidefinite Programming (SDP). To recover a feasible solution from the relaxed SDR output, the Gaussian Randomization method is subsequently used for post-processing. This integrated framework—comprising BCD-based iterative optimization, SDR-based convex relaxation, and Gaussian randomization for solution recovery—strikes a balance between computational efficiency and solution quality, and provides a practical approach for solving the original non-convex problem.


\subsection{\textit{The Proposed BCD Algorithm}}

In this work, we introduce the BCD algorithm into the optimization framework. The algorithmic process is illustrated as follows.
\begin{algorithm} 
    \caption{BCD Algorithm} \label{Ag1}
    \textbf{Input}: Propagation delay $t_d$,  transmission rate in HMIMO channel $v_{\text{data}}$
    
    \textbf{Output}: Updated data transmission rate $v_{\text{data}}$, updated propagation delay $t_d$
    
    Initialize the propagation delay $t_d \gets 0.6s$

    \For{$iteration=1,2,\cdots,max\_iterations$}{
        Fix $t_d$, calculate the rate in the HMIMO channel, and update the phase shifts of each metasurface layer on SIM by using SDR method;\\

        Fix $v_{\text{data}}$, optimize $t_d$ using an optimizer and update the total end-to-end delay $T$;\\

        Calculate the result according to the objective function of $f_{v_{data},t_d}$;  }
        
   \Return{the value of regret, updated propagation delay $t_d$, upper probabilistic delay bound}
\end{algorithm}\\

As the first step, by fixing the propagation delay $t_d$, the original problem $\mathcal{P}1$ is transformed into a simplified subproblem $\mathcal{P}2$, which involves optimization solely with respect to the data transmission rate $v_{\text{data}}$. Although $\mathcal{P}2$ remains a non-convex and computationally challenging problem, it will be addressed in the subsequent subsection~\ref{subsec:The Proposed Semidefinite Relaxation Method} using the Semidefinite Relaxation technique.
\begin{equation}
\begin{aligned}
\begin{gathered}
\mathcal{P} 2: \min _{\mathrm{v}_{\text {data}}}\quad f\left(\mathrm{v}_{\text {data}}, \mathrm{t}_{\mathrm{d}}\right)=\mathrm{e^{-\frac{v_{\text{data}} B - \delta_d l_d}{l_d} t_b}}+\mathrm{e^{-\frac{v_{\text{data}} B }{S l_d}t_d}}+\mathrm{C}_1 \\
\quad \text { s.t. } 
\theta_{\mathrm{m}}^1 \in(0,2 \pi], \quad\xi_{\mathrm{n}}^{\mathrm{k}} \in(0,2 \pi],
\end{gathered}
\end{aligned}
\end{equation}
where $C_1 = \rho t_d$ be a predefined constant related to the propagation delay. Assuming that $v_{\text{data}}$ has been updated in the current iteration, we subsequently fix its value to optimize $t_d$, resulting in the reformulated subproblem denoted as $\mathcal{P}3$.

\begin{equation}
\begin{aligned}
\begin{gathered}
\mathcal{P} 3: \min _{\mathrm{t}_{\mathrm{d}}} \quad f \left(\mathrm{t}_{\mathrm{d}}\right)=\mathrm{e}^{-\frac{\mathrm{v}_{\mathrm{data}} \mathrm{~B}}{\mathrm{~S} \mathrm{l}_{\mathrm{d}}} \mathrm{t}_{\mathrm{d}}}+\rho \mathrm{t}_{\mathrm{d}}+\mathrm{C}_2 \\
\text { s.t. } 
\mathrm{t}_{\mathrm{d}}=\mathrm{T}-\mathrm{D}_2-\mathrm{t}_{\mathrm{b}}.
\end{gathered}
\end{aligned}
\end{equation}

In this case, let $C_2 = e^{-\frac{(v_{\text{data}} B - \delta_d l_d)}{l_d} t_b}$. With this substitution, the subproblem $\mathcal{P}3$ reduces to a basic convex optimization problem. This illustrates the effectiveness of the BCD algorithm in handling this class of structured non-convex problems by decomposing them into tractable subproblems.
\begin{equation}
\begin{aligned}
\nabla f\left(t_d\right) & =-\frac{v_{\text {data }} B}{S l_d} e^{-\frac{v_{d a t a} B}{S l_d} t_d}+\rho ,\\
t_d & =-\frac{S l_d}{v_{\text {data }} B} \ln \left(\frac{\rho S l_d}{v_{\text {data }} B}\right).
\end{aligned}
\end{equation}

We directly derive the closed-form solution for the optimal propagation delay $t_d$. Once the optimal $t_d$ is obtained, the optimal total delay $T$ can be computed by the constraint $T = D_2 + t_b + t_d$, and the probabilistic delay bound $\bar{D}$ can be subsequently evaluated.

\subsection{\textit{The Proposed Semidefinite Relaxation Method}}
\label{subsec:The Proposed Semidefinite Relaxation Method}



After applying the BCD method to decompose the original problem, it follows that minimizing the objective function $f(v_{\text{data}})$, while treating all other variables as constants, is equivalent to maximizing the data transmission rate $v_{\text{data}}$ in the HMIMO channel. Accordingly, the original optimization problem $\mathcal{P}2$ can be simplified to the following problem, denoted as $\mathcal{P}4$:
\begin{equation}
\begin{aligned}
\begin{gathered}
\mathcal{P}4: \max _{\phi^{\mathrm{l}}, \psi^{\mathrm{k}}} \quad \mathrm{v}_{\text {data }}=\log _2 \operatorname{det}\left(\mathbf{I}+\frac{\mathbf{H Z H}^{\mathrm{H}}}{\mathrm{~N}_0 \mathrm{~B}}\right) \\
\text { s.t. } \ \
\theta_{\mathrm{m}}^1 \in(0,2 \pi], \quad\quad\xi_{\mathrm{n}}^{\mathrm{k}} \in(0,2 \pi].
\end{gathered}
\end{aligned}
\end{equation}
Clearly, to maximize $v_{\text{data}}$, it is essential to maximize the effective channel gain characterized by $\mathbf{H}\mathbf{Z}\mathbf{H}^H$. Given that the channel matrix $\mathbf{H}$ is directly influenced by the phase shifts within the transmission coefficient matrices, the optimization problem can ultimately be reformulated as one targeting the HMIMO channel matrix $\mathbf{H}$. In other words, by judiciously controlling the phase shifts, the transmission rate of the HMIMO channel can be enhanced, which is critical for improving overall system performance. Another definition of $\mathbf{H}$ is provided as follows
\begin{equation}
\begin{aligned}
\mathbf{H} 
&= \mathbf{U}^1 \boldsymbol{\Psi}^1 \cdots \mathbf{U}^K \boldsymbol{\Psi}^K \, 
   \mathbf{G} \, 
   \boldsymbol{\Phi}^L \mathbf{W}^L \cdots \boldsymbol{\Phi}^1 \mathbf{W}^1 \\
&= \boldsymbol{h}_{\mathrm{SIM}, r}^\mathrm{H} \, \boldsymbol{\Omega} \, \boldsymbol{h}_{t, \mathrm{SIM}},
\end{aligned}
\label{eq:H}
\end{equation}
where $\mathbf{h}_{\text{SIM},r}$ denotes the channel coefficient matrix from the $p$-th metasurface layer of the SIM to the receiver, with $p \in \left[1, \max\left(L, K\right)\right]$. The transmission coefficient matrix is defined as $\mathbf{\Omega} = \text{diag}(e^{j\Omega_1}, \dots, e^{j\Omega_{\text{a}}}) \in \mathbb{C}^{\text{a} \times \text{a}}$, where $\Omega_{\mathrm{a}}$ represents the phase shifts of the single-layer SIM subject to optimization. Depending on whether the phase shifts of the TX-SIM or RX-SIM are being optimized, we set $\Omega = \boldsymbol{\phi}$ or $\Omega = \boldsymbol{\psi}$, respectively. The parameter $\mathrm{a}$ denotes the number of meta-atoms in the metasurface layer, which varies according to whether the transmission coefficient matrix corresponds to the TX-SIM or RX-SIM. Furthermore, $\mathbf{h}_{\mathrm{t,SIM}}$ denotes the channel coefficient matrix from the transmitter to the $p$-th metasurface layer of the SIM. The channel coefficient matrices under different optimization objectives are expressed as following two scenarios,

(i) if the phase shifts of TX-SIM is being optimized, then we have
\begin{equation}
\begin{aligned}
\boldsymbol{h}_{S I M, r} &= \mathbf{Y} \mathbf{G} \boldsymbol{\Phi}^L \mathbf{W}^L \cdots \boldsymbol{\Phi}^{p+1} \mathbf{W}^{p+1}, \\
\boldsymbol{h}_{t, SIM} &= \mathbf{W}^p \boldsymbol{\Phi}^{p-1} \mathbf{W}^{p-1} \cdots \boldsymbol{\Phi}^1 \mathbf{W}^1. 
\label{eq:h_SIM_r}
\end{aligned}
\end{equation}

(ii) if the phase shifts of RX-SIM is being optimized, then we have
\begin{equation}
\begin{aligned}
\boldsymbol{h}_{S I M, r} &= \mathbf{U}^1 \boldsymbol{\Psi}^1 \cdots \mathbf{U}^{p-1} \boldsymbol{\Psi}^{p-1} \mathbf{U}^p ,\\
\boldsymbol{h}_{t, SIM} &= \mathbf{U}^{p+1} \boldsymbol{\Psi}^{p+1} \cdots \mathbf{U}^K \boldsymbol{\Psi}^K \mathbf{G} \mathbf{X}.
\label{eq:h_t_SIM}
\end{aligned}
\end{equation}
It is important to note that the phase shift values contained in the matrices $\mathbf{X}$ and $\mathbf{Y}$ are iteratively updated throughout the progression of the BCD algorithm.

Based on the aforementioned channel model, the original optimization problem $\mathcal{P}4$ can be further reformulated as the optimization problem $\mathcal{P}5$
\begin{equation}
\begin{aligned}
\begin{gathered}
\mathcal{P}5: \max _{\Omega} \sum_{s=1}^S| | \boldsymbol{h}_{S I M, r, s}^H \boldsymbol{\Omega} \boldsymbol{h}_{t, S I M}| |^2 \\
\text { s.t. } 0 \leq \Omega_{\text {a }} \leq 2 \pi \quad\text { a } \in[1, \max \{M, N\}],
\end{gathered}
\label{sdfblaer}
\end{aligned}
\end{equation}
where $\boldsymbol{h}_{\text{SIM}, r, s}^H \in \mathbb{C}^{1 \times \text{a}}$ denotes the channel coefficient vector from the $p$-th metasurface layer of the SIM to the receiver for the $s$-th data flow. The vector $\boldsymbol{v} = [v_1, v_2, \ldots, v_{\text{a}}]^T \in \mathbb{C}^{\text{a} \times 1}$ is defined such that each element $v_{\text{a}} = e^{j \Omega_{\text{a}}}$ corresponds to the phase shift of a meta-atom. The constraint on the phase shifts can be equivalently expressed as a unit-modulus constraint, i.e.,
\(|v_{\text{a}}|^2 = 1\), which leads to the following formulation
\begin{equation}
    \sum_{s=1}^S| | \boldsymbol{h}_{S I M, r, s}^H \boldsymbol{\Omega} \boldsymbol{h}_{t, S I M}| |^2=\sum_{s=1}^S| | \boldsymbol{v}^H \boldsymbol{\Lambda}_s| |^2 ,
\end{equation}
where $\boldsymbol{\Lambda}s = \mathrm{diag}\left( \boldsymbol{h}_{\text{SIM}, r, s}^H \right)$ and $\boldsymbol{h}_{t, \text{SIM}} \in \mathbb{C}^{\text{a} \times S}$, thereby leading to the formulation of problem $\mathcal{P}6$ as shown as:
\begin{equation}
\begin{aligned}
\begin{gathered}
\mathcal{P}6: \quad \max_{\mathbf{v}} \sum_{s=1}^{S} \mathbf{v}^H \boldsymbol{\Lambda}_s \boldsymbol{\Lambda}_s^H \mathbf{v}\\
\text { s.t. }\left|\mathbf{v}_{\text {a }}\right|^2=1, \text { a } \in[1, \max \{M, N\}].
\label{sdfblaer}
\end{gathered}
\end{aligned}
\end{equation}

This transformation results in a Quadratically Constrained Quadratic Programming (QCQP) problem. By introducing an auxiliary variable $i$, the problem can be equivalently reformulated as
\begin{equation}
\begin{aligned}
\begin{gathered}
\mathcal{P}7: \quad \max_{\mathbf{\bar{v}}} \sum_{s=1}^{S} \mathbf{\bar{v}}^H \mathbf{R}_s \mathbf{\bar{v}} \\
\text { s.t. }\left|\mathbf{v}_{\text {a }}\right|^2=1, \text { a } \in[1, \max \{M, N\}],
\end{gathered}
\end{aligned}
\end{equation}
where 
\[ \mathbf{R}_s = \begin{bmatrix}
\mathbf{\Lambda}_s \mathbf{\Lambda}_s^H & 0 \\
0 & 0
\end{bmatrix}, \quad 
\mathbf{\bar{v}} = \begin{bmatrix}
\mathbf{v} \\
i
\end{bmatrix}.
\]

The problem remains non-convex due to the rank-one constraint. Noting that
\( \mathbf{\bar{v}}^H \mathbf{R}_s \mathbf{\bar{v}} = \operatorname{tr}(\mathbf{R}_s \mathbf{\bar{v}} \mathbf{\bar{v}}^H) \),
we define the matrix $\mathbf{V} = \mathbf{\bar{v}} \mathbf{\bar{v}}^H$, which is positive semidefinite and constrained to have $\mathrm{rank}(\mathbf{V}) = 1$. By applying SDR method, we relax the rank-one constraint, thereby reformulating the problem as:
\begin{equation}
\begin{aligned}
\begin{gathered}
\mathcal{P}8: \quad \max_{\mathbf{V}} \sum_{s=1}^{S} \operatorname{tr}\left(\mathbf{R}_s \mathbf{V}\right) \\
\text{s.t.} \quad \mathbf{V}_{\text{a, a}} = 1, \quad \text{i} \in [1, \max\{\mathbf{M}, \mathbf{N}\} + 1], \ \mathbf{V} \succeq 0.    
\end{gathered}
\end{aligned}
\end{equation}

At this stage, problem $\mathcal{P}8$ can be transformed into a convex SDP problem, which can be efficiently solved using existing convex optimization techniques. In this work, the MOSEK solver is employed to numerically obtain the optimal solution to the SDP problem.

Since the optimal solution obtained via the SDR method is generally a positive semidefinite matrix, it is necessary to extract feasible solutions that satisfy the original problem’s constraints in practical SIM phase shift optimization. To this end, the Gaussian randomization method is commonly employed to recover feasible solutions from the relaxed SDR solution. The core idea of Gaussian randomization involves performing eigenvalue decomposition on the SDR solution, followed by random sampling to approximate a rank-one solution that meets the problem constraints. Owing to its relatively low computational complexity and straightforward implementation, the Gaussian randomization method has been widely adopted in various wireless communication optimization problems. In this work, for problem $\mathcal{P}8$, the specific steps of the Gaussian randomization method are outlined as follows.

First, perform eigenvalue decomposition on the SDR solution matrix $\mathbf{V}$ to obtain the eigenvalue vector $\boldsymbol{\sigma}$ and the eigenvector matrix $\mathbf{P} \in \mathbb{C}^{\text{a} \times \text{a}}$. Next, generate a normalized complex Gaussian random vector $\mathbf{r} \in \mathbb{C}^{\text{a} \times 1}$, which is then used to construct the candidate optimized phase shift vector as follows:
\begin{equation}
\begin{aligned}
\begin{gathered}
\mathbf{v} = \mathbf{P} \, \text{diag}(\sqrt{\boldsymbol{\sigma}}) \, \mathbf{r}.   
\end{gathered}
\end{aligned}
\end{equation}

This vector is then employed to construct the transmission coefficient matrix
\(\boldsymbol{\Omega} = \text{diag}(\mathbf{v})\) which is subsequently substituted into the expression for the transmission rate in the HMIMO channel as follows:
\begin{equation}
\begin{aligned}
\begin{gathered}
\mathrm{v}_{\text {data }}=\log _2 \operatorname{det}\left(\mathbf{I}+\frac{\mathbf{H Z H}^{\mathrm{H}}}{\mathrm{~N}_0 \mathrm{~B}}\right)  .
\end{gathered}
\end{aligned}
\end{equation}

\section{Performance Evaluation}
\label{sec:Performance Evaluation}
In this section, we validate the effectiveness of the SDR method under various parameter settings, demonstrating its capability to enhance the transmission rate. Furthermore, we analyze the influence of key system parameters on both the regret and transmission rate, and investigate the effects of delay and average packet size on the probabilistic delay bound. Unless otherwise specified, the key parameter settings are summarized in Table~\ref{tab:simulationargs}. Additionally, we compare the performance of the proposed BCD algorithm with that of the low-complexity alternating optimization algorithm in terms of transmission rate.

 \begin{table}[!ht]
    \centering
    \caption{Simulation Parameters}
    \begin{tabular}{|l|l|}
    \hline
        \textbf{Description} & \textbf{Value} \\ \hline
        Number of SIM Metasurface layers & 3 \\ \hline
        Number of Meta-Atoms & 36 \\ \hline
        Frequency \( f \) & 7 GHz \\ \hline
        Thickness of TX-SIM and RX-SIM & 0.1 m \\ \hline
        Wavelength \( \lambda \) & 21.4 mm \\ \hline
        Reference Path Loss \( PL_0 \) & 80 dB \\ \hline
        Area of Meta-Atom  & 0.01 \( \text{m}^2 \) \\ \hline
        Rician Factor \( \kappa \) & 20 \\ \hline
        Expected Waiting Time \( t_b \) & 0.5 s \\ \hline
        Mean of Packet Size \( l_d \) & 100 Mb \\ \hline
        Bandwidth \( B \) & 10 MHz \\ \hline
        Noise Power Spectral Density & -210 dBW/Hz \\ \hline
        Transmission Power \( P \) & 0.01 W \\ \hline
        Expected Propagation Delay \( t_d \) & 0.6 s \\ \hline
    \end{tabular}
    \label{tab:simulationargs}
\end{table}

Firstly, we illustrate the impact of the number of data streams on the transmission rate, propagation delay, and regret. The results are presented in Fig.~\ref{Figure_differentdata_vdata} through Fig. ~\ref{Figure_differentdata_regret}. It can be observed that, as the number of data streams increases, the transmission rate improves, the propagation delay decreases, and the regret exhibits a downward trend.
\begin{figure}[h]
\centering
     \includegraphics[width=0.3\textwidth ]{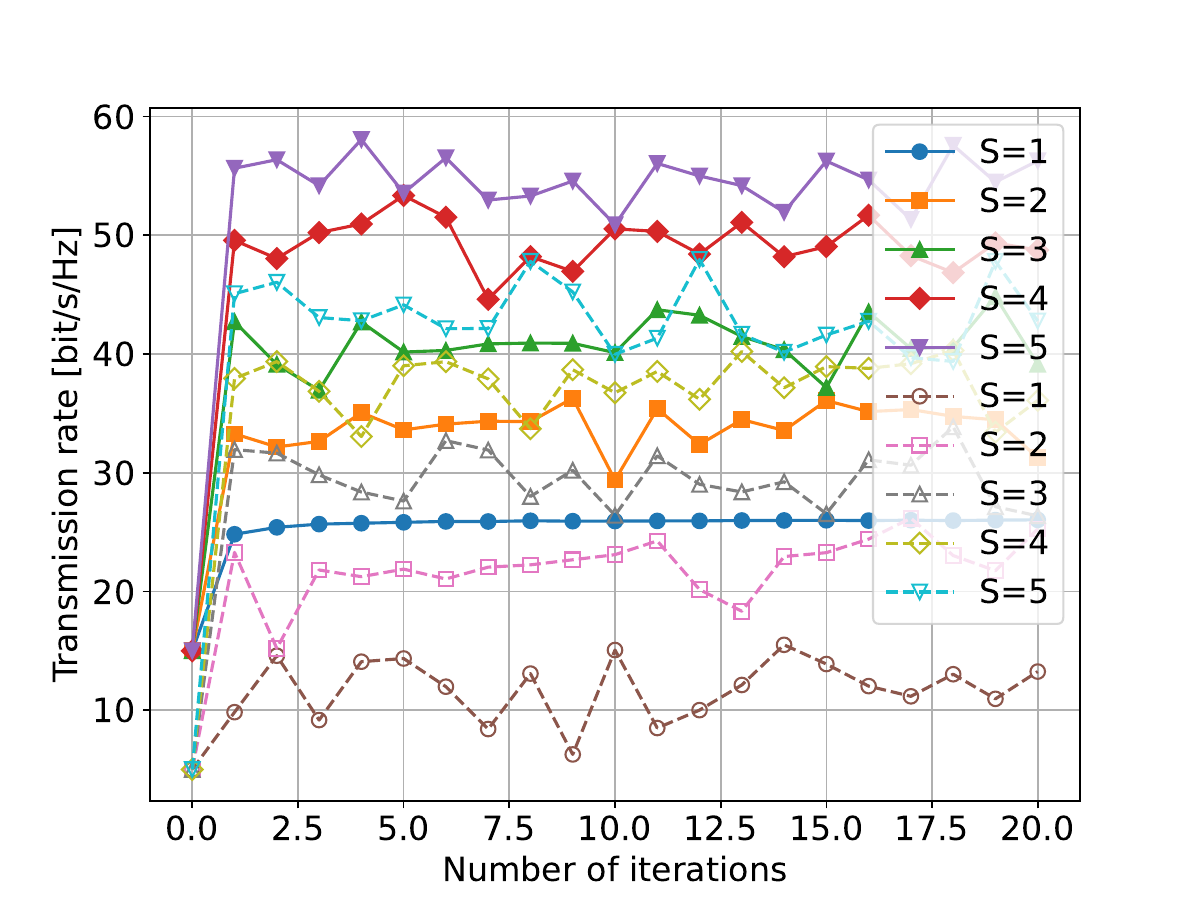} 
     \caption{The relationship between the number of iterations and the data transmission rate $v_{\text{data}}$ under varying numbers of data streams. The solid curves represent the performance of the proposed BCD algorithm, whereas the dashed curves correspond to the results obtained using the low-complexity AO algorithm.}
\label{Figure_differentdata_vdata}
\end{figure}


As illustrated in Fig.~\ref{Figure_differentdata_vdata}, although the transmission rate $v_{\text{data}}$ in the HMIMO channel exhibits fluctuations across iterations, it generally follows an upward trend and eventually converges. This observation confirms the effectiveness of the SDR-based optimization approach. In comparison, the low-complexity AO algorithm achieves reduced computational overhead but demonstrates relatively lower optimization performance. Nevertheless, regardless of the algorithm used, the transmission rate increases with the number of data streams. This behavior is primarily attributed to the improved spatial reuse efficiency enabled by a greater number of data streams, thereby enhancing the overall system throughput. Among the tested configurations, the transmission rate attains its maximum when $S = 5$.

   Fig.~\ref{Figure_differentdata_t_d} illustrates the variation in propagation delay, with the initial value set to $0.6\text{s}$. Overall, the propagation delay exhibits a decreasing trend as the number of iterations increases, demonstrating the effectiveness of the proposed BCD algorithm in optimizing delay. It is worth noting that when $S = 2$, the propagation delay is slightly higher than that of $S = 1$ (approximately $0.01\text{s}$ difference), suggesting that when the transmission rate is relatively low, increasing $v_{\text{data}}$ has a limited impact on the regret $f(v_{\text{data}}, t_d)$, leading to similar optimization results for $t_d$. In contrast, when $S = 5$, the propagation delay is reduced by approximately $0.1\text{s}$ compared to the case of $S = 1$. This reduction is particularly significant in low-altitude intelligent network scenarios, where low transmission latency is essential to ensure real-time system performance. These results highlight that a slightly increase in the number of data streams can effectively reduce propagation delay and enhance system responsiveness.

\begin{figure}[t!]
\centering
\subfloat[{Propagation delay $t_d$.}]{\label{Figure_differentdata_t_d}{\includegraphics[width=0.48\linewidth]{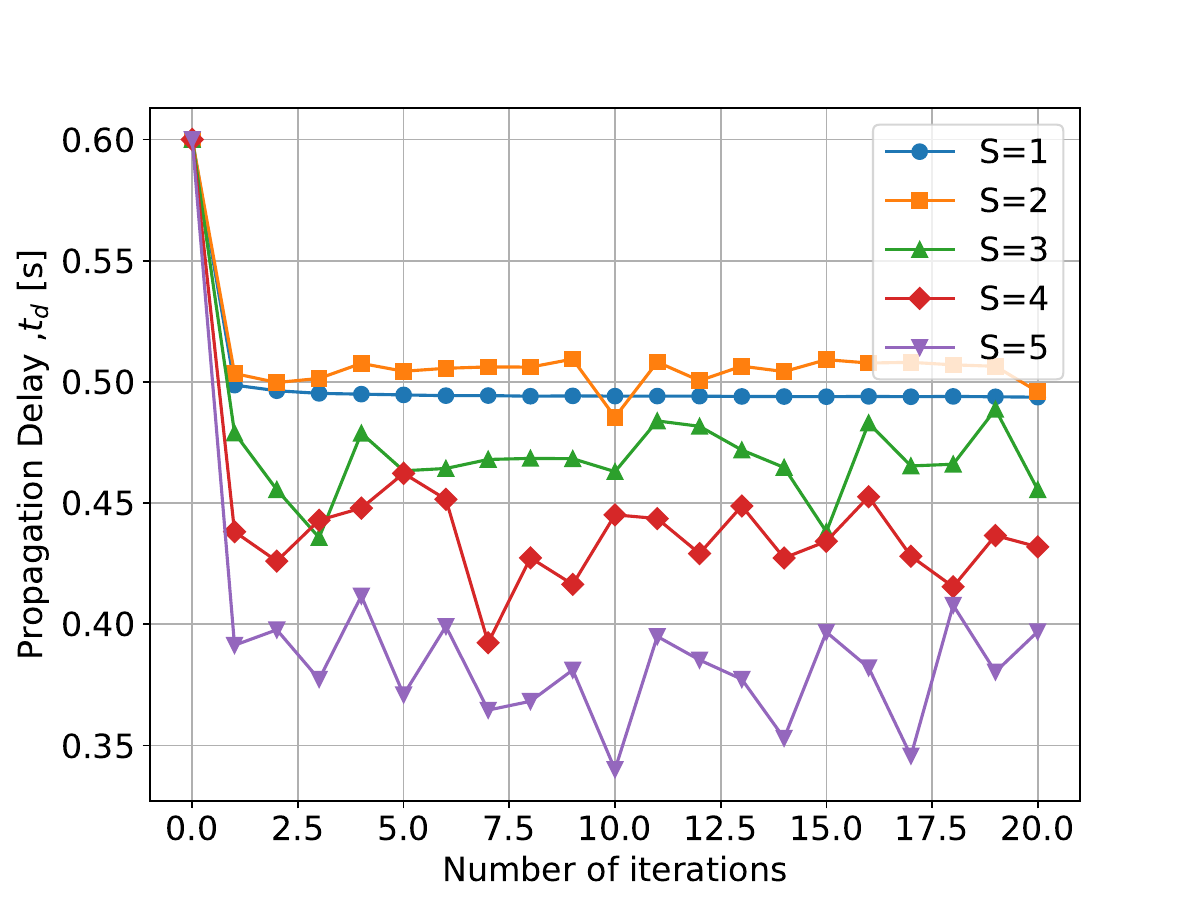}}} 
\subfloat[{Regret $f(v_{data},t_d)$.}]{\label{Figure_differentdata_regret}{\includegraphics[width=0.48\linewidth]{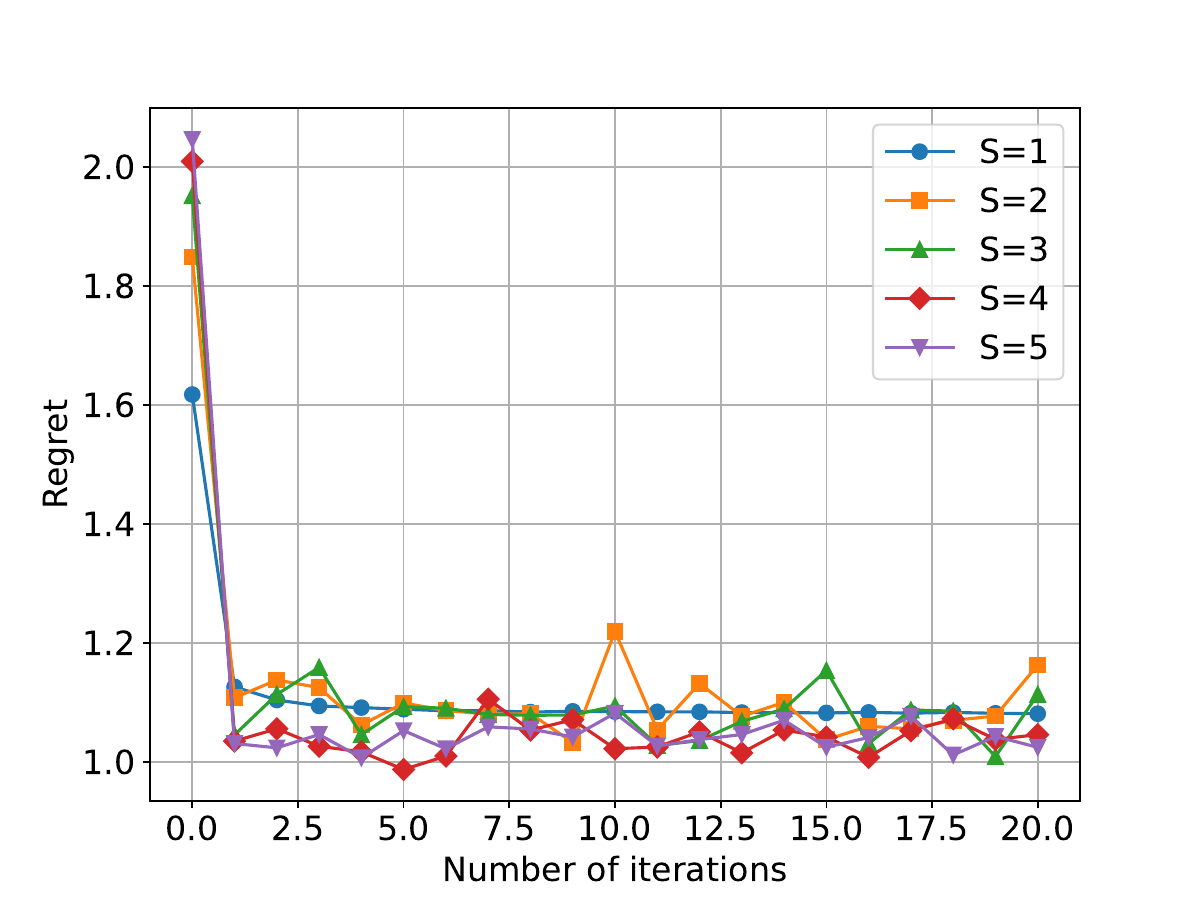}}}
\caption{Relationship between the number of iterations and propagation delay $t_d$/regret $f(v_{data},t_d)$ under different numbers of data streams.}
\label{Figure_3D}
\end{figure}

Fig.~\ref{Figure_differentdata_regret} illustrates the trend of regret as iterations increase. For all values of the number of data streams, the regret consistently decreases with iteration count, though some fluctuations are observed during the convergence process. Notably, for $S = 4$ and $S = 5$, the regret exhibits the most rapid decline, and their final values are nearly indistinguishable. This behavior may be attributed to the structural constraint imposed by the limited number of metasurface layers, as well as the phenomenon of diminishing returns, wherein the incremental performance gain from adding additional data streams becomes marginal beyond a certain threshold. Nevertheless, under identical packet size $l_d$, the configuration with $S = 5$ achieves a lower propagation delay and higher transmission rate compared to other settings. Therefore, $S = 5$ represents a more advantageous configuration for practical deployment in low-altitude intelligent network scenarios.


In Fig.~\ref{Figure_differentatom_vdata} and Fig.~\ref{Figure_differentatom_regret}, we examine the impact of the number of meta-atoms per metasurface layer on the transmission rate and regret, with experimental parameters set as $L = K = 3$. The transmission rate shown is averaged over all iterations.
Fig.~\ref{Figure_differentatom_vdata} shows that the average transmission rate increases as the number of meta-atoms per SIM metasurface layer increases under both BCD algorithm and the low-complexity AO algorithm. Under identical meta-atom settings, the proposed method achieves a 51.47\% average improvement in transmission rate over the low-complexity AO algorithm. In addition, for the same number of meta-atoms, using more data streams leads to a higher average transmission rate, which supports the trend observed in Fig.~\ref{Figure_differentdata_vdata}. However, it is important to note that when $M = N = 9$, the average transmission rate becomes nearly the same for $S \geq 2$. This is because a small number of meta-atoms leads to strong interference, and too many data streams reduce the efficiency of spatial reuse. Therefore, setting the number of meta-atoms to 9 in real-world SIM designs is not recommended, as it may lower system performance.

\begin{figure}[h]
\centering
     \includegraphics[width=0.3\textwidth ]{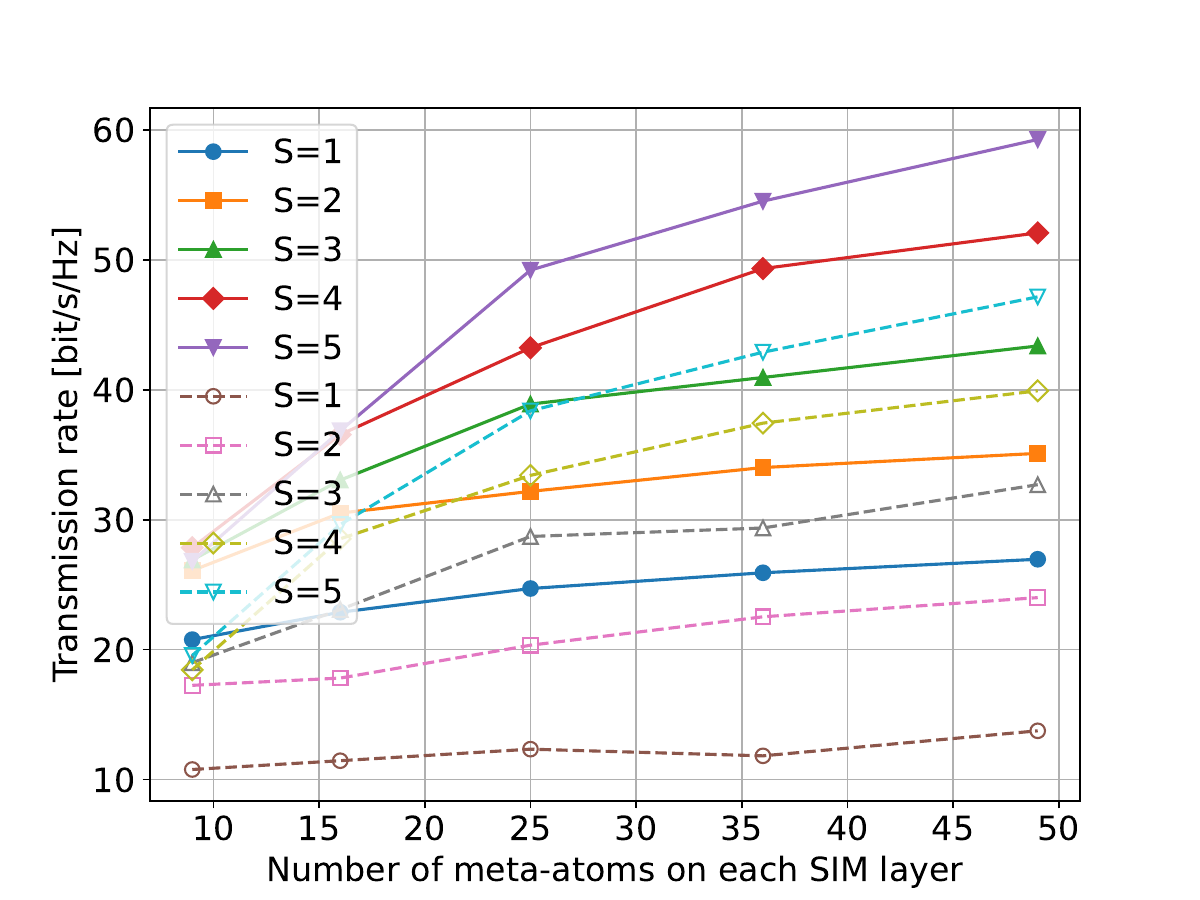} 
     \caption{Relationship between the number of meta-atoms per metasurface layer and $v_{\text{data}}$ under different numbers of data streams. Solid lines represent the results obtained using the BCD algorithm, while dashed lines indicate the performance of the proposed low-complexity AO algorithm.} 
\label{Figure_differentatom_vdata}
\end{figure}

\begin{figure}[h]
\centering
     \includegraphics[width=0.3\textwidth]{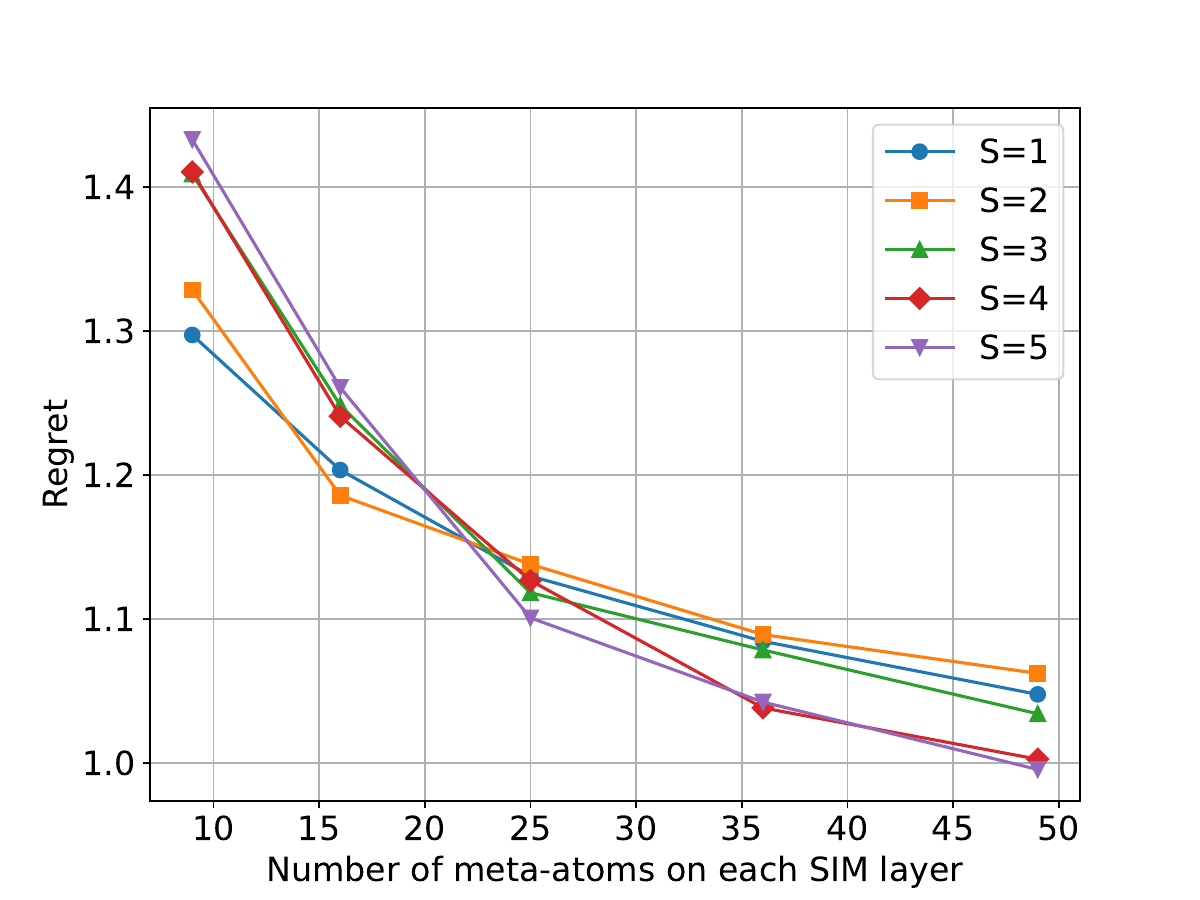} 
     \caption{Relationship between regret and the number of meta-atoms per metasurface layer under different numbers of data streams.} 
\label{Figure_differentatom_regret}
\end{figure}

\begin{figure}[h]
\centering
     \includegraphics[width=0.32\textwidth ]{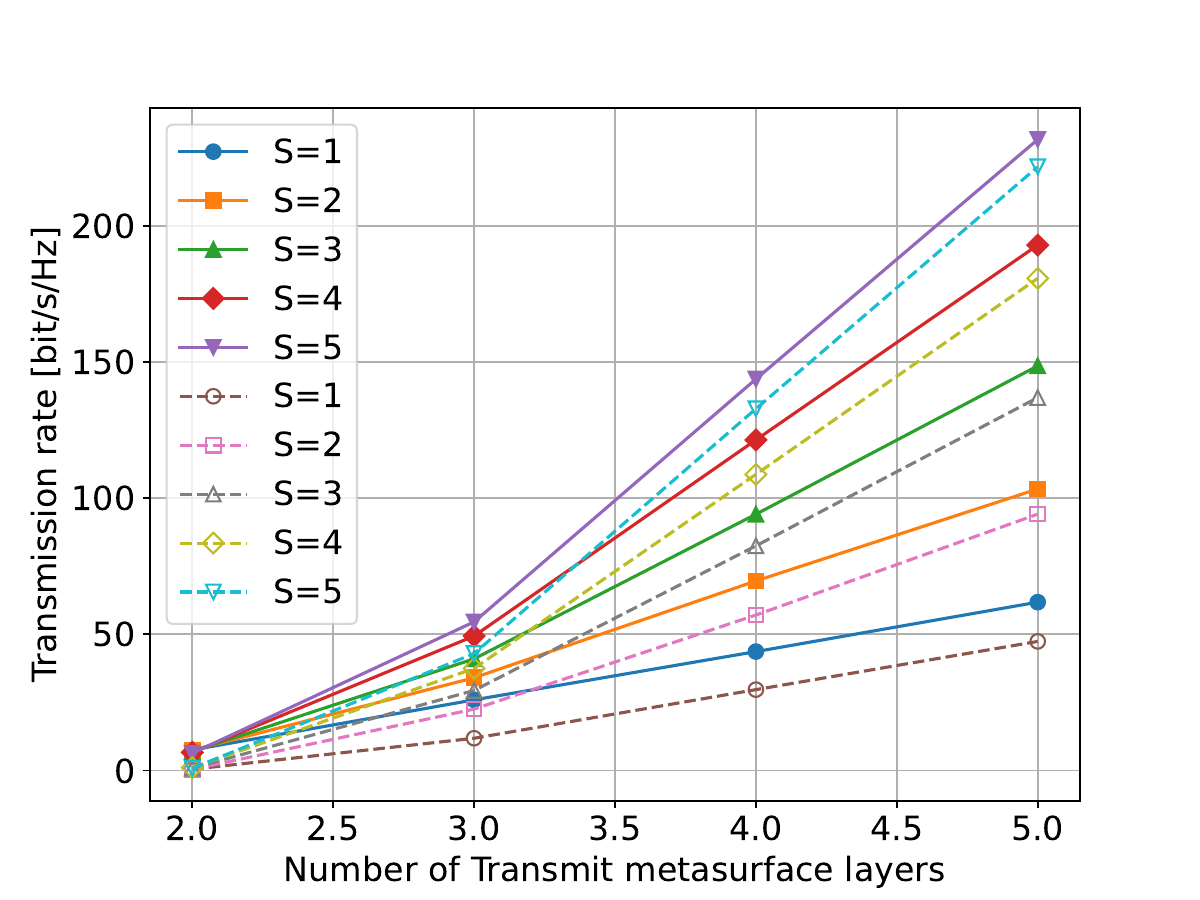} 
     \caption{Relationship between the number of metasurface layers and the transmission rate $v_{\text{data}}$ under different numbers of data streams. Solid lines present the results obtained using the BCD algorithm, while dashed lines show the performance of the low-complexity AO algorithm.} 
\label{Figure_differentlayer_vdata}
\end{figure}

\begin{figure}[h]
\centering
     \includegraphics[width=0.35\textwidth ]{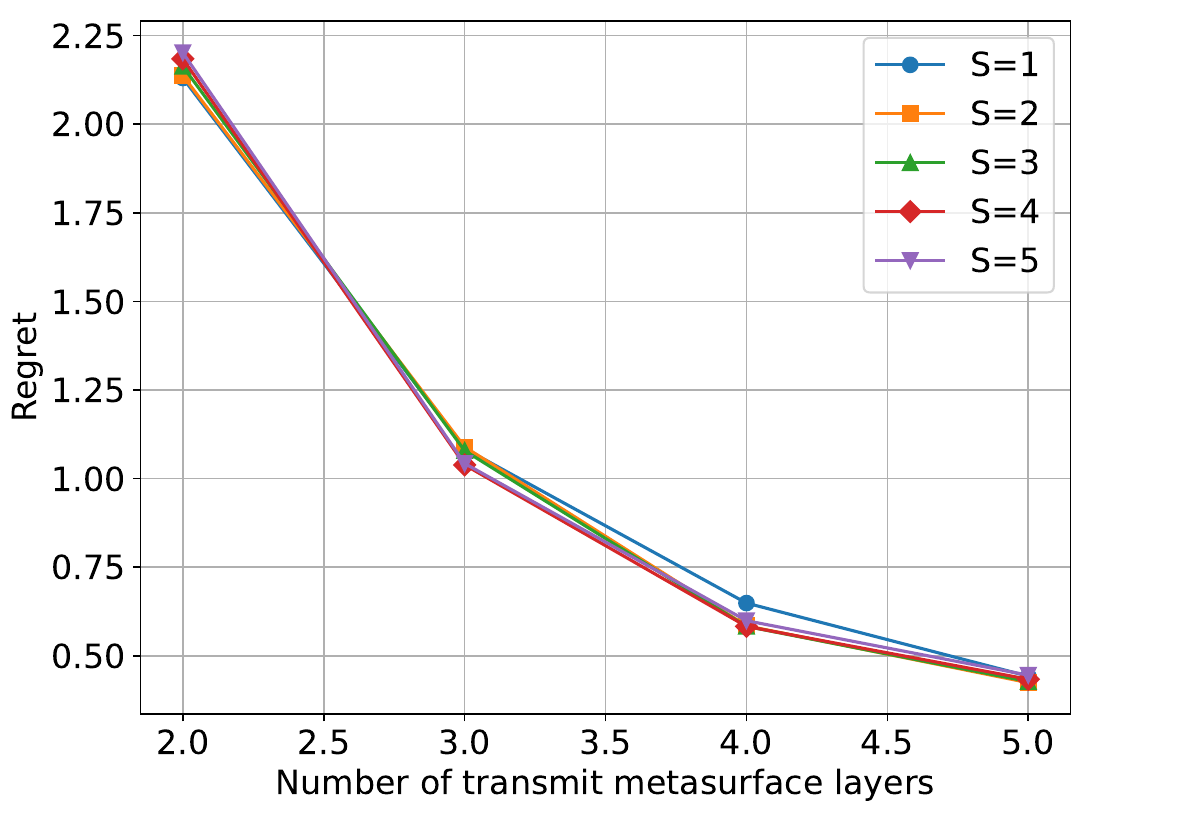} 
     \caption{Relationship between the number of metasurface layers and transmission rate $v_{data}$ under different numbers of data streams.} 
\label{Figure_differentlayer_regret}
\end{figure}
Fig.~\ref{Figure_differentatom_regret} shows that as the number of meta-atoms increases, regret gradually decreases. However, it is important to note that when $M = N < 25$, the regret is lower for $S < 3$ than for $S \geq 3$, whereas when $M = N \geq 25$, higher regret performance is observed for $S \geq 4$. This suggests that when the number of meta-atoms on each metasurface layer is small, increasing the number of data streams does not significantly enhance system performance. The primary reason is that a small number of meta-atoms limits the system’s degrees of freedom, resulting in reduced beamforming capability and less efficient interference management. In this case, fewer data streams lead to lower interference, and the optimization algorithm has limited flexibility, slowing the convergence of regret. In contrast, when the number of meta-atoms is large, the system gains more spatial degrees of freedom, enabling more effective interference mitigation and higher transmission rates. Additionally, the increased flexibility allows the SDR-based optimization to perform better, which accelerates the convergence of regret.
\begin{figure}[t!]
\centering
\subfloat[{S = 1}]{\label{1}{\includegraphics[width=0.33\linewidth]{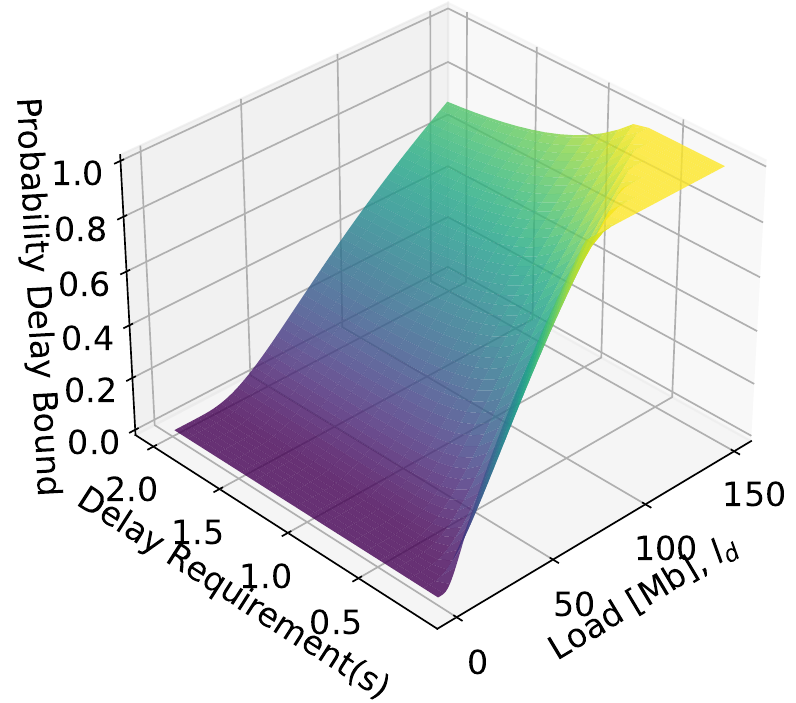}}} 
\subfloat[{S = 3}]{\label{3}{\includegraphics[width=0.33\linewidth]{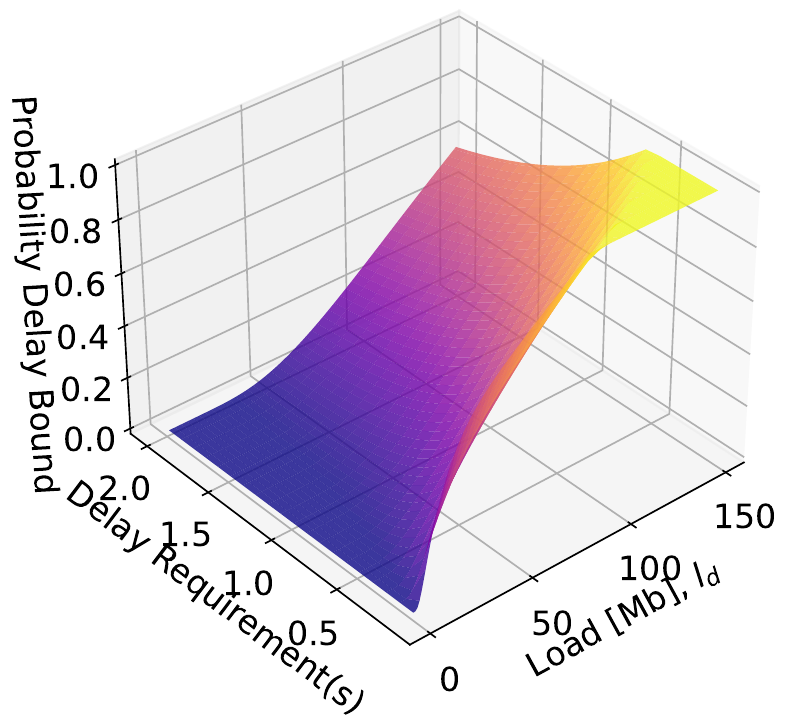}}}
\subfloat[{S = 5}]{\label{5}{\includegraphics[width=0.33\linewidth]{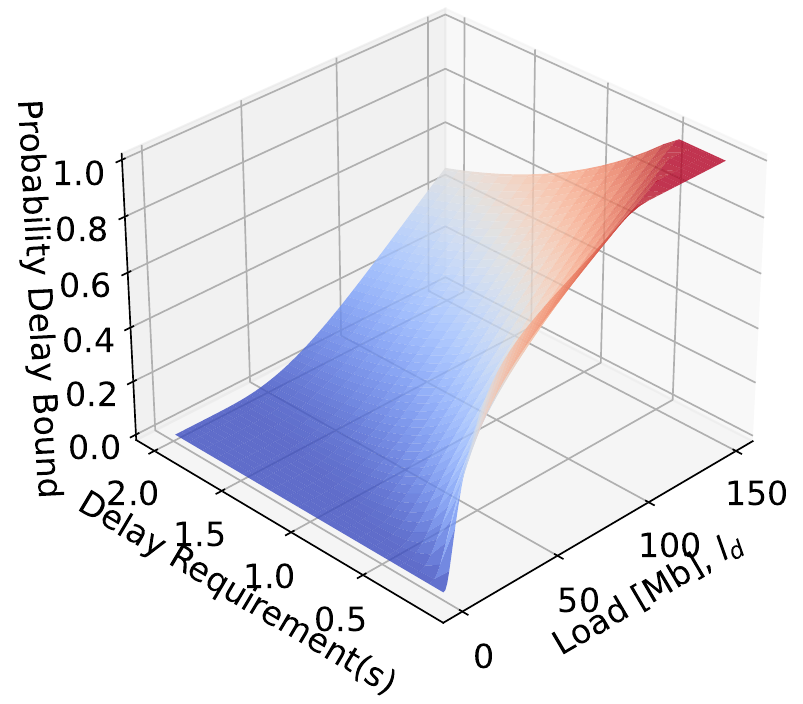}}}
\caption{3D illustration of the probabilistic delay bound with respect to the delay requirement and system load.}
\label{Figure_3D}
\end{figure}

Furthermore, Fig.~\ref{Figure_differentlayer_vdata} illustrates that, regardless of whether the SDR method or the low-complexity AO algorithm is employed for optimization, the average transmission rate of the system significantly increases as the number of metasurface layers grows. Notably, this growth rate is substantially higher than that achieved by increasing the number of meta-atoms per metasurface layer. This is primarily because adding more metasurface layers introduces a greater diversity of scattering paths within the channel, thereby enhancing spatial reuse capabilities and effectively boosting the average transmission rate. Consequently, under the same conditions, the highest transmission rate is achieved when $S = 5$, and the SDR-based approach consistently outperforms the low-complexity AO algorithm in terms of performance. However, it is important to note that, given the limited total thickness of the SIM, excessively increasing the number of metasurface layers may lead to greater energy attenuation and increased structural complexity, thereby constraining the potential for further performance improvement.



\begin{figure}[t!]
\centering
\subfloat[{Total delay ($T$).}]{\label{Figure_probability_t}{\includegraphics[width=0.48\linewidth]{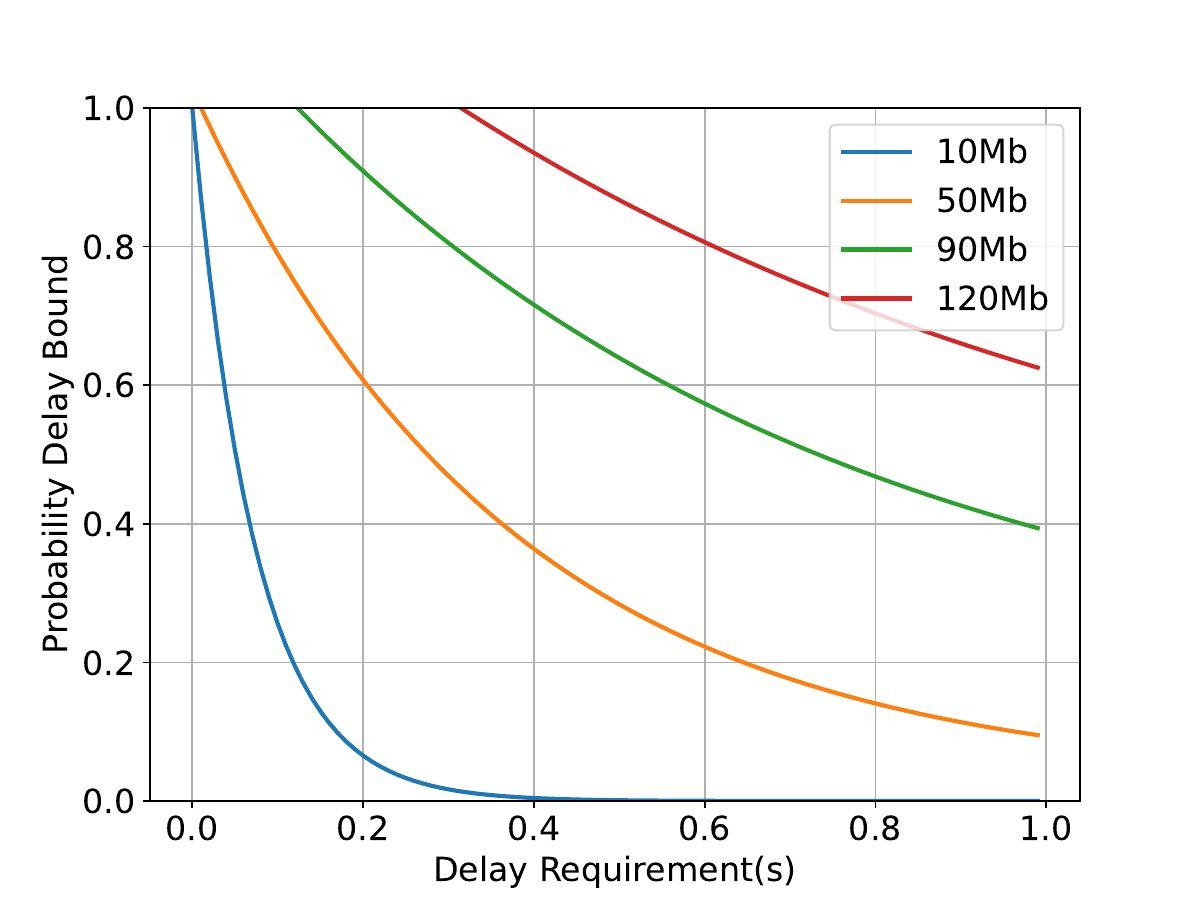}}} 
\subfloat[{Load ($l_d$).}]{\label{Figure_probability_l_d}{\includegraphics[width=0.48\linewidth]{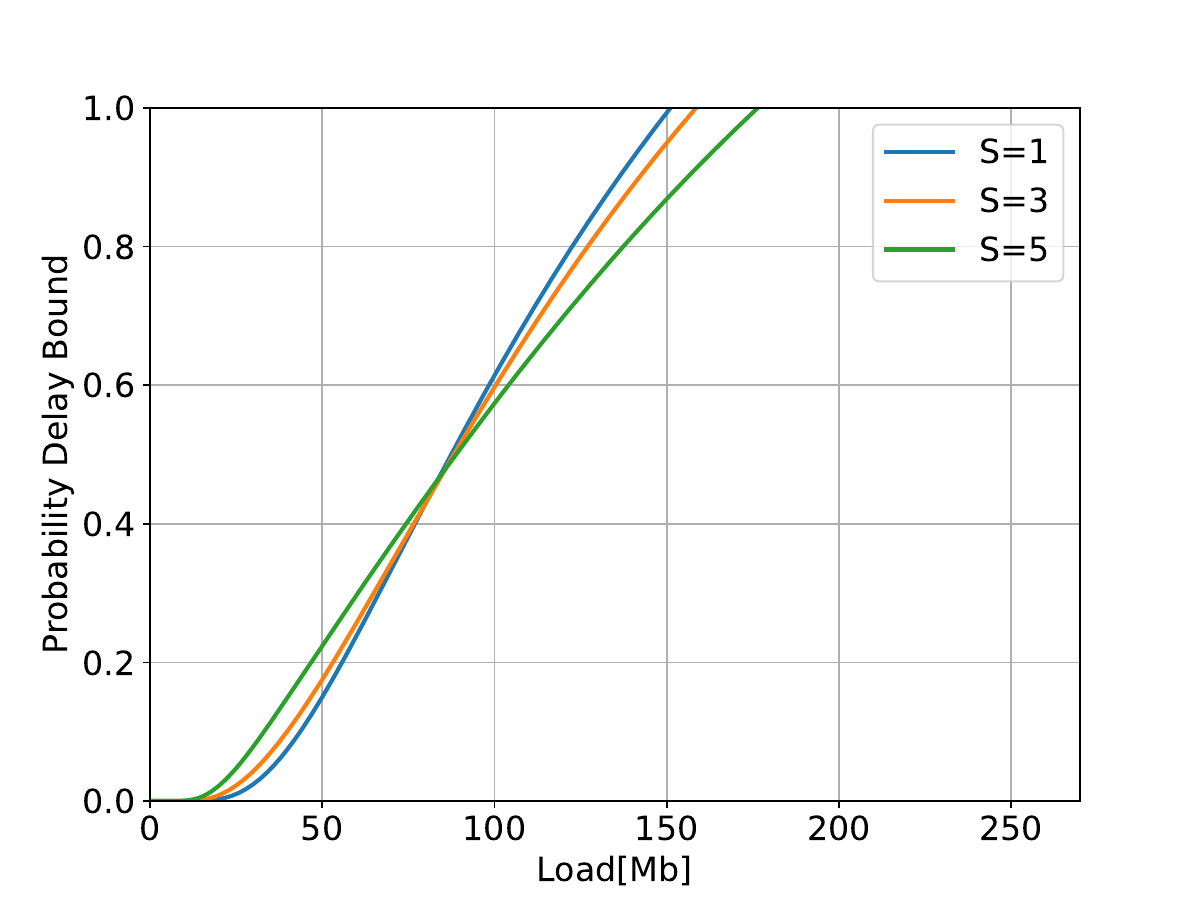}}}
\caption{Relationship between total delay ($T$)/load ($l_d$) and probabilistic delay bound under different load.}
\label{Figure_3D}
\end{figure}


Fig.~\ref{Figure_differentlayer_regret} shows that as the number of metasurface layers increases, the average transmission rate improves while the regret gradually decreases, further confirming the performance enhancement brought by multi-layer structures. However, it is observed that when $L = K = 2$, the regret values across different data stream configurations remain close. As the number of metasurface layers increases to $L = K = 3$ or $4$, the differences in regret become more pronounced across varying data stream quantities, indicating that higher spatial reuse capability leads to improved optimization performance. Nevertheless, when the number of layers increases further to $L = K = 5$, the regret values for different numbers of data streams converge, suggesting that the system is reaching its optimal structural configuration. At this point, the benefits of adding more data streams for regret reduction begin to saturate.


Fig.~\ref{Figure_3D} provides a comprehensive three-dimensional illustration of the relationship among the probabilistic delay bound, the propagation delay, and the packet size. The parameters are set as follows: under fixed average transmission rates, the average transmission rate is 25.92 bps/Hz for $S=1$, 40.96 bps/Hz for $S=3$, and 54.55 bps/Hz for $S=5$.


From the subplots in Fig.~\ref{Figure_3D}, it can be observed that as the packet size increases, the probabilistic delay bound also increases. This is because larger packets require longer transmission times within the network, which raises the probabilistic delay bound. Additionally, at lower total end-to-end delay $T$, the probabilistic delay bound is higher, since a stricter delay requirement increases the probability of exceeding that threshold.
Moreover, when $S = 5$, the average transmission rate is the highest, resulting in the best performance in terms of both the probabilistic delay bound and propagation delay. This indicates that a higher number of data streams can improve spatial reuse efficiency, thereby reducing end-to-end propagation delay and optimizing the probabilistic delay bound. Conversely, when $S = 1$, the average transmission rate is the lowest, leading to the poorest performance in terms of the probabilistic delay bound and propagation delay.
These results further demonstrate that appropriately increasing the number of data streams can effectively improve the system’s delay performance, enhancing communication reliability and Quality of Service (QoS) assurance.

Fig.~\ref{Figure_probability_t} illustrates the impact of total delay on the probabilistic delay bound, where the transmission rate is fixed and averaged at 40.86 bps/Hz. As shown in Fig.~\ref{Figure_probability_t}, the probabilistic delay bound $P\{D > T\}$ decreases as the delay threshold $T$ increases. This indicates that the likelihood of packets exceeding the delay threshold diminishes with a larger allowable delay.
Moreover, a larger average packet size $l_d$ leads to a higher probabilistic delay bound, meaning the probability of delay violation increases. Conversely, for a fixed probabilistic delay bound, larger packets require a longer propagation delay. This implies that smaller packet sizes contribute to better delay performance in the HMIMO communication system.
The main reason is that smaller packets allow the HMIMO channel transmission rate $v_{\text{data}}$ to adequately handle incoming traffic, thus reducing propagation delay $t_d$. Consequently, the overall packet delay is mainly influenced by the transmission time within the HMIMO channel rather than by congestion or queuing delays.
These results emphasize that appropriately managing packet size $l_d$ is crucial for optimizing propagation delay $t_d$ and improving the delay performance of the communication system.


Fig.~\ref{Figure_probability_l_d} illustrates the relationship between the probabilistic delay bound $P\{D > 0.7\}$ and the average packet size $l_d$, under a fixed average transmission rate. The average transmission rates are set as follows: 25.92 bps/Hz for $S=1$, 40.96 bps/Hz for $S=3$, and 54.55 bps/Hz for $S=5$.
As observed in Fig.~\ref{Figure_probability_l_d}, the probabilistic delay bound increases with the mean packet size $l_d$. This is because larger packets require longer transmission times, resulting in a higher probability of exceeding the delay threshold.
In particular, when the packet size is less than 85 Mb, the probabilistic delay bound for $S=1$ is lower than those for $S=3$ and $S=5$, indicating that $S=1$ achieves the best performance in this range. This occurs because, for small packet transmission, the available channel resources with $S=1$ are sufficient to support the data flow.
However, when the packet size exceeds 85 Mb, the probabilistic delay bound for $S=5$ becomes significantly better than those for other stream numbers, demonstrating the advantage of a larger number of data streams in transmitting large packets.
Specifically, the average packet size at which the probabilistic delay bound reaches 1 (i.e., when $T$ exceeds 0.7 s) is 150 Mb for $S=1$, 158 Mb for $S=3$, and 175 Mb for $S=5$. This indicates that for larger data transmissions, increasing the number of data streams can effectively reduce the end-to-end propagation delay, thereby improving delay performance.


\section{Conclusion}
\label{sec:Conclusion}
This paper proposes a SIM-assisted HMIMO aerial communication system designed to meet the stringent latency and reliability requirements of dynamic AAM scenarios. By integrating the TX-SIM and RX-SIM on eVTOL platforms, the system leverages precoding and combining techniques at both the transmitter and receiver ends to enhance performance. To address the trade-off between propagation delay and the probabilistic delay bound, we formulate a non-convex optimization problem, which is efficiently solved using a BCD framework combined with the SDR method. Further analysis reveals the impact of the number of data streams, meta-atom density, and the number of metasurface layers on delay performance and system regret. Additionally, we investigate the effects of system load and total delay $T$ on the probabilistic delay bound. This work provides practical insights for deploying SIM-assisted HMIMO systems and lays the foundation for future research on scalable multi-eVTOL coordination and learning-based optimization under imperfect CSI.

\appendix
\subsection{Proof of Theorem \ref{theorm_barD1}}\label{app_barD1}
Based on SNC, the queuing delay is defined as follows:
\begin{equation}
D_1 = \inf\{\tau \geq 0 : A(t) \leq A^*(t + \tau)\}.
\end{equation}
Consequently, the event \( \{D_1 > t_b\} \) implies \( \{A(t) > A^*(t + t_b)\} \), and it follows that:
$\{D_1 > t_b\} \subseteq \{A(t) > A^*(t + t_b)\}$ According to the stochastic arrival curve \( A \sim \langle f, \alpha \rangle \) and the stochastic service curve \(S\sim \langle g, \beta \rangle \), the probabilistic upper bound of the queuing delay can be expressed as~\cite{10039312}: $P\{D_1 > t_b\} $ represents that  the probabilistic delay bound of $D_1$, which is also called an upper bound of the complementary cumulative distribution function (CCDF) of the packets delay $t_b$. i.e. the probability that the data packets wait in the buffer for more than $t_b$.
\begin{equation}
\begin{aligned}
&P\{D_1 > t_b\} \leq P\{A(t) - A^*(t + t_b) > 0\}  \\
&\leq P\left\{
\sup_{0 \leq \tau \leq t} 
\left[ A(\tau, t) - \alpha(t - \tau) \right] 
+ A \otimes \beta_{\text{data}}(t + t_b) 
 \right.  \\
&\quad \left. - A^*(t + t_b)> \inf_{0 \leq \tau \leq t + t_b} 
\left[ \beta_{\text{data}}(t + t_b - \tau) 
- \alpha(t - \tau) \right] 
\right\}  \\
&\leq f \otimes g \left( 
\inf_{0 \leq \tau \leq t + t_b} 
\left[ \beta_{\text{data}}(t + t_b - \tau) 
- \alpha(t - \tau) \right] 
\right)  \\
&\leq f \otimes g \left( 
\beta_{\text{data}}(t_b) 
\right)
\end{aligned}
\label{eq:prob_delay_bound}
\end{equation}

In the last line of \eqref{eq:prob_delay_bound}, we apply a sufficient stability condition: for all \( t \geq 0 \), we have \( \beta(t) \geq \alpha(t) \).

We hold the service curve \( \beta_{\text{data}}(t) \) for \( D_1 \) as
\begin{equation}
\beta_{\text{data}}(t) = v_{\text{data}}  B t ,
\label{eq:service curve}
\end{equation}
where \( v_{\text{data}} \) is the transmission rate in the HMIMO channel,the calculation formula of $v_{data}$ is given in the subsection~\ref{sec:SIM Performance Analysis}. \( B \) denotes the system bandwidth and \( g(x) = 0 \). Here, \( f(x) \) and \( g(x) \) are bounding functions.
Using equations \eqref{eq:prob_delay_bound} and \eqref{eq:service curve}, the probabilistic delay bound of \( D_1 \) becomes
\begin{equation}
P\{D_1 > t_b\} \leq f(v_{\text{data}} B t_b) \triangleq \bar{D}_1(t_b).
\end{equation}
\appendices
\footnotesize
\bibliography{biblio}

\begin{thebibliography}{10}
\providecommand{\url}[1]{#1}
\csname url@samestyle\endcsname
\providecommand{\newblock}{\relax}
\providecommand{\bibinfo}[2]{#2}
\providecommand{\BIBentrySTDinterwordspacing}{\spaceskip=0pt\relax}
\providecommand{\BIBentryALTinterwordstretchfactor}{4}
\providecommand{\BIBentryALTinterwordspacing}{\spaceskip=\fontdimen2\font plus
\BIBentryALTinterwordstretchfactor\fontdimen3\font minus \fontdimen4\font\relax}
\providecommand{\BIBforeignlanguage}[2]{{%
\expandafter\ifx\csname l@#1\endcsname\relax
\typeout{** WARNING: IEEEtran.bst: No hyphenation pattern has been}%
\typeout{** loaded for the language `#1'. Using the pattern for}%
\typeout{** the default language instead.}%
\else
\language=\csname l@#1\endcsname
\fi
#2}}
\providecommand{\BIBdecl}{\relax}
\BIBdecl

\bibitem{9447255}
A.~P. Cohen, S.~A. Shaheen, and E.~M. Farrar, ``Urban air mobility: History, ecosystem, market potential, and challenges,'' \emph{IEEE Transactions on Intelligent Transportation Systems}, vol.~22, no.~9, pp. 6074--6087, 2021.

\bibitem{9952882}
B.~Kim and H.~Kim, ``6g for uam communications: Challenges and visions,'' in \emph{2022 13th International Conference on Information and Communication Technology Convergence (ICTC)}, 2022, pp. 1526--1528.

\bibitem{10343091}
S.~Mishra and P.~Palanisamy, ``Autonomous advanced aerial mobility—an end-to-end autonomy framework for uavs and beyond,'' \emph{IEEE Access}, vol.~11, pp. 136\,318--136\,349, 2023.

\bibitem{10915325}
T.~Pasha, S.~H. M, S.~K~R, B.~S. D, and J.~R. Munavalli, ``Evaluation of uav path planning approaches for urban air mobility,'' in \emph{2025 International Conference on Intelligent and Innovative Technologies in Computing, Electrical and Electronics (IITCEE)}, 2025, pp. 1--6.

\bibitem{10749494}
R.~W. Ghatas, R.~H. Mogford, S.~A. Verma, R.~D. Wood, G.~M.~R. Torres, and A.~H. Farrahi, ``Development of candidate airspace procedures for urban air mobility in the dallas area,'' in \emph{2024 AIAA DATC/IEEE 43rd Digital Avionics Systems Conference (DASC)}, 2024, pp. 1--7.

\bibitem{10388419}
H.~Wei, B.~Lou, Z.~Zhang, B.~Liang, F.-Y. Wang, and C.~Lv, ``Autonomous navigation for evtol: Review and future perspectives,'' \emph{IEEE Transactions on Intelligent Vehicles}, vol.~9, no.~2, pp. 4145--4171, 2024.

\bibitem{10677063}
S.~R. Yelisetty and A.~Kuriakose, ``Autonomous air traffic management scheduling system,'' in \emph{2024 IEEE International Conference on Electronics, Computing and Communication Technologies (CONECCT)}, 2024, pp. 1--6.

\bibitem{10230034}
A.~A. Zaid, B.~E.~Y. Belmekki, and M.-S. Alouini, ``evtol communications and networking in uam: Requirements, key enablers, and challenges,'' \emph{IEEE Communications Magazine}, vol.~61, no.~8, pp. 154--160, 2023.

\bibitem{9952749}
J.~Bae, H.~Lee, and H.~Lee, ``A study on communication technologies for urban air mobility,'' in \emph{2022 13th International Conference on Information and Communication Technology Convergence (ICTC)}, 2022, pp. 2235--2240.

\bibitem{10444151}
S.~Al-Rubaye, C.~Conrad, and A.~Tsourdos, ``Communication network architecture with 6g capabilities for urban air mobility,'' in \emph{2024 IEEE International Conference on Consumer Electronics (ICCE)}, 2024, pp. 1--6.

\bibitem{10539166}
A.~Manzoor, M.~Ozger, D.~Schupke, and C.~Cavdar, ``Combined airspace and non-terrestrial 6g networks for advanced air mobility,'' in \emph{2024 20th International Conference on the Design of Reliable Communication Networks (DRCN)}, 2024, pp. 47--53.

\bibitem{10163760}
R.~Deng, Y.~Zhang, H.~Zhang, B.~Di, H.~Zhang, H.~V. Poor, and L.~Song, ``Reconfigurable holographic surfaces for ultra-massive mimo in 6g: Practical design, optimization and implementation,'' \emph{IEEE Journal on Selected Areas in Communications}, vol.~41, no.~8, pp. 2367--2379, 2023.

\bibitem{10301687}
H.~Wu, Y.~Chen, Y.~Ming, and Z.~Wang, ``Two-timescale beamforming optimization for downlink multi-user holographic mimo surfaces,'' \emph{IEEE Transactions on Vehicular Technology}, vol.~73, no.~3, pp. 4476--4481, 2024.

\bibitem{10232975}
T.~Gong, P.~Gavriilidis, R.~Ji, C.~Huang, G.~C. Alexandropoulos, L.~Wei, Z.~Zhang, M.~Debbah, H.~V. Poor, and C.~Yuen, ``Holographic mimo communications: Theoretical foundations, enabling technologies, and future directions,'' \emph{IEEE Communications Surveys \& Tutorials}, vol.~26, no.~1, pp. 196--257, 2024.

\bibitem{9848831}
R.~Deng, B.~Di, H.~Zhang, H.~V. Poor, and L.~Song, ``Holographic mimo for leo satellite communications aided by reconfigurable holographic surfaces,'' \emph{IEEE Journal on Selected Areas in Communications}, vol.~40, no.~10, pp. 3071--3085, 2022.

\bibitem{8264743}
S.~Hu, F.~Rusek, and O.~Edfors, ``Beyond massive mimo: The potential of positioning with large intelligent surfaces,'' \emph{IEEE Transactions on Signal Processing}, vol.~66, no.~7, pp. 1761--1774, 2018.

\bibitem{8741198}
C.~Huang, A.~Zappone, G.~C. Alexandropoulos, M.~Debbah, and C.~Yuen, ``Reconfigurable intelligent surfaces for energy efficiency in wireless communication,'' \emph{IEEE Transactions on Wireless Communications}, vol.~18, no.~8, pp. 4157--4170, 2019.

\bibitem{9086766}
M.~A. ElMossallamy, H.~Zhang, L.~Song, K.~G. Seddik, Z.~Han, and G.~Y. Li, ``Reconfigurable intelligent surfaces for wireless communications: Principles, challenges, and opportunities,'' \emph{IEEE Transactions on Cognitive Communications and Networking}, vol.~6, no.~3, pp. 990--1002, 2020.

\bibitem{10158690}
J.~An, C.~Xu, D.~W.~K. Ng, G.~C. Alexandropoulos, C.~Huang, C.~Yuen, and L.~Hanzo, ``Stacked intelligent metasurfaces for efficient holographic mimo communications in 6g,'' \emph{IEEE Journal on Selected Areas in Communications}, vol.~41, no.~8, pp. 2380--2396, 2023.

\bibitem{10515204}
J.~An, C.~Yuen, C.~Xu, H.~Li, D.~W.~K. Ng, M.~Di~Renzo, M.~Debbah, and L.~Hanzo, ``Stacked intelligent metasurface-aided mimo transceiver design,'' \emph{IEEE Wireless Communications}, vol.~31, no.~4, pp. 123--131, 2024.

\bibitem{9374444}
Z.~Wang, J.~Zhang, and T.~Huang, ``Determining delay bounds for a chain of virtual network functions using network calculus,'' \emph{IEEE Communications Letters}, vol.~25, no.~8, pp. 2550--2553, 2021.

\bibitem{10039312}
P.~Li, Y.~Wei, N.~Li, T.~Huang, and N.~Jin, ``Packet loss and delay analysis for a wireless fading channel using stochastic network calculus,'' in \emph{2022 14th International Conference on Wireless Communications and Signal Processing (WCSP)}, 2022, pp. 1070--1074.

\bibitem{5593629}
G.~Wang, M.~Lai, F.~Huang, and T.~Li, ``A stochastic network calculus for service performance in wireless mesh networks,'' in \emph{2010 5th International Conference on Computer Science \& Education}, 2010, pp. 1685--1688.

\bibitem{6733260}
W.~Jiao, M.~Sheng, K.-S. Lui, and Y.~Shi, ``End-to-end delay distribution analysis for stochastic admission control in multi-hop wireless networks,'' \emph{IEEE Transactions on Wireless Communications}, vol.~13, no.~3, pp. 1308--1320, 2014.

\bibitem{5946716}
H.-T. Wai, W.-K. Ma, and A.~M.-C. So, ``Cheap semidefinite relaxation mimo detection using row-by-row block coordinate descent,'' in \emph{2011 IEEE International Conference on Acoustics, Speech and Signal Processing (ICASSP)}, 2011, pp. 3256--3259.

\bibitem{8811733}
Q.~Wu and R.~Zhang, ``Intelligent reflecting surface enhanced wireless network via joint active and passive beamforming,'' \emph{IEEE Transactions on Wireless Communications}, vol.~18, no.~11, pp. 5394--5409, 2019.

\bibitem{9925754}
J.~V.~T. Ribeiro, M.~C.~R. Murça, and W.~S.~S. Souza, ``Tradeoffs between safety and efficiency from dynamic airspace geofencing for advanced air mobility,'' in \emph{2022 IEEE/AIAA 41st Digital Avionics Systems Conference (DASC)}, 2022, pp. 1--8.

\bibitem{10356185}
K.-Y. Tsai, G.-Y. Meng, T.-L. Wu, M.-H. Zheng, W.-Y. Wang, C.-M. Kung, Y.-C. Chen, C.-F. Huang, T.-C. Hsieh, H.-S. Hsu, H.-D. Lin, and J.-X. Shi, ``evtol, uam, and aam: Brief development history and implementation outlook of the united states,'' in \emph{2023 IEEE International Conference on e-Business Engineering (ICEBE)}, 2023, pp. 287--296.

\bibitem{10803914}
S.~Shahriar~Ahmed, G.~Fountas, V.~Lurkin, P.~C. Anastasopoulos, Y.~Zhang, M.~Bierlaire, and F.~Mannering, ``The state of urban air mobility research: An assessment of challenges and opportunities,'' \emph{IEEE Transactions on Intelligent Transportation Systems}, vol.~26, no.~2, pp. 1375--1394, 2025.

\bibitem{10767193}
H.~Niu, X.~Lei, J.~An, L.~Zhang, and C.~Yuen, ``On the efficient design of stacked intelligent metasurfaces for secure siso transmission,'' \emph{IEEE Transactions on Information Forensics and Security}, vol.~20, pp. 60--70, 2025.

\bibitem{10279173}
J.~An, M.~Di~Renzo, M.~Debbah, and C.~Yuen, ``Stacked intelligent metasurfaces for multiuser beamforming in the wave domain,'' in \emph{ICC 2023 - IEEE International Conference on Communications}, 2023, pp. 2834--2839.

\bibitem{10445164}
X.~Yao, J.~An, L.~Gan, M.~Di~Renzo, and C.~Yuen, ``Channel estimation for stacked intelligent metasurface-assisted wireless networks,'' \emph{IEEE Wireless Communications Letters}, vol.~13, no.~5, pp. 1349--1353, 2024.

\bibitem{10445200}
S.~Lin, J.~An, L.~Gan, M.~Debbah, and C.~Yuen, ``Stacked intelligent metasurface enabled leo satellite communications relying on statistical csi,'' \emph{IEEE Wireless Communications Letters}, vol.~13, no.~5, pp. 1295--1299, 2024.

\bibitem{xiong2025digitaltwinbasedsimcommunication}
\BIBentryALTinterwordspacing
K.~Xiong, Z.~Chen, J.~Xie, S.~Leng, and C.~Yuen, ``Digital twin-based sim communication and flight control for advanced air mobility,'' 2025. [Online]. Available: \url{https://arxiv.org/abs/2501.01837}
\BIBentrySTDinterwordspacing

\bibitem{an2024two}
J.~An, C.~Yuen, Y.~L. Guan, M.~Di~Renzo, M.~Debbah, H.~V. Poor, and L.~Hanzo, ``Two-dimensional direction-of-arrival estimation using stacked intelligent metasurfaces,'' \emph{IEEE Journal on Selected Areas in Communications}, 2024.

\bibitem{10288334}
K.~Feng and F.~Baccelli, ``Spatial network calculus and performance guarantees in wireless networks,'' \emph{IEEE Transactions on Wireless Communications}, vol.~23, no.~5, pp. 5033--5047, 2024.

\bibitem{5554709}
Y.~Sun, B.~Zhang, and R.~Li, ``Research on real-time of hybrid scheduling for mine industrial ethernet with network calculus,'' in \emph{2010 8th World Congress on Intelligent Control and Automation}, 2010, pp. 1145--1149.

\bibitem{10706082}
K.~Xiong, S.~Leng, L.~Chen, D.~Zhang, C.~Huang, and C.~Yuen, ``Ris-aided trajectory optimization in layered urban air mobility,'' \emph{IEEE Internet of Things Journal}, vol.~12, no.~2, pp. 2188--2202, 2025.

\bibitem{7011339}
Z.~Tao, W.~Quan, and W.~Gao-Cai, ``Performance analysis on m2m communication network based on stochastic network calculus,'' in \emph{2014 IEEE 13th International Conference on Trust, Security and Privacy in Computing and Communications}, 2014, pp. 865--870.

\bibitem{jiang2010note}
Y.~Jiang, ``A note on applying stochastic network calculus,'' in \emph{Proc. SIGCOMM}, vol.~10, no. 2010, 2010, pp. 16--20.

\bibitem{10534211}
A.~Papazafeiropoulos, J.~An, P.~Kourtessis, T.~Ratnarajah, and S.~Chatzinotas, ``Achievable rate optimization for stacked intelligent metasurface-assisted holographic mimo communications,'' \emph{IEEE Transactions on Wireless Communications}, vol.~23, no.~10, pp. 13\,173--13\,186, 2024.

\bibitem{9714406}
X.~Hu, R.~Deng, B.~Di, H.~Zhang, and L.~Song, ``Holographic beamforming for ultra massive mimo with limited radiation amplitudes: How many quantized bits do we need?'' \emph{IEEE Communications Letters}, vol.~26, no.~6, pp. 1403--1407, 2022.

\bibitem{9716880}
O.~T. Demir, E.~Bjornson, and L.~Sanguinetti, ``Channel modeling and channel estimation for holographic massive mimo with planar arrays,'' \emph{IEEE Wireless Communications Letters}, vol.~11, no.~5, pp. 997--1001, 2022.

\end{thebibliography}
\end{document}